\newcommand{\ave}[1]{\left\langle#1\right\rangle}
\newcommand{\1}{\mathbbm{1}}
\newcommand{\bX}{{\bf X}}
\newcommand{\bY}{{\bf Y}}
\newcommand{\bZ}{{\bf Z}}
\newcommand{\bP}{{\bf P}}
\newcommand{\cH}{{\mathcal{H}}}
\newcommand{\D}{{\mathcal{D}}}
\newcommand{\E}{{\mathbb{E}}}
\newcommand{\X}{{\mathsf{X}}}
\newcommand{\ii}{{\rm i}}
\let\X\@undefined
\newcommand{\X}{{\mathcal{X}}}
\newcommand{\dketbra}[1]{|{#1}\rangle\langle{#1}|}
\renewcommand{\tr}[1]{{\rm Tr}\left[#1\right]}
\newcommand{\locc}{{\rm LOCC}_1}
\newcommand{\beq}{\begin{equation}}
\newcommand{\beql}[1]{\begin{equation}\label{#1}}
\newcommand{\eeq}{\end{equation}}
\newcommand{\eeqp}{\,\,\,.\end{equation}}
\newcommand{\eeqc}{\,\,\,,\end{equation}}
\theoremstyle{plain}
\newtheorem{definition}{Definition}
\newtheorem{theorem}{Theorem}
\newtheorem{lemma}{Lemma}
\renewcommand{\ALG@name}{Protocol}
\newenvironment{protocol}[2][Default Title]
  {\begin{algorithm}[H]  
   \caption{#1}\label{#2}%
   \begin{algorithmic}[1]
   \raggedright}%
  {\end{algorithmic}
   \end{algorithm}}
\let\newfloat\newfloat@ltx
\begin{document}
\title{Faithful and secure distributed quantum sensing under general-coherent attacks}
\author{G. Bizzarri}
\affiliation{Dipartimento di Scienze, Universit\`a degli Studi Roma Tre, Via della Vasca Navale, 84, 00146 Rome, Italy}
\author{M. Barbieri}
\affiliation{Dipartimento di Scienze, Universit\`a degli Studi Roma Tre, Via della Vasca Navale, 84, 00146 Rome, Italy}
\affiliation{Istituto Nazionale di Ottica - CNR, Largo E. Fermi 6, 50125 Florence, Italy}
\author{M. Manrique}
\affiliation{Dipartimento di Scienze, Universit\`a degli Studi Roma Tre, Via della Vasca Navale, 84, 00146 Rome, Italy}
\author{M. Parisi}
\affiliation{Dipartimento di Scienze, Universit\`a degli Studi Roma Tre, Via della Vasca Navale, 84, 00146 Rome, Italy}
\author{F. Bruni}
\affiliation{Dipartimento di Scienze, Universit\`a degli Studi Roma Tre, Via della Vasca Navale, 84, 00146 Rome, Italy}
\author{I. Gianani}
\affiliation{Dipartimento di Scienze, Universit\`a degli Studi Roma Tre, Via della Vasca Navale, 84, 00146 Rome, Italy}
\author{M. Rosati}
\affiliation{Dipartimento di Ingegneria Civile, Informatica e delle Tecnologie Aeronautiche, Universit\`a degli Studi Roma Tre, Via Vito Volterra 62, 00146 Rome, Italy}
\begin{abstract}
Quantum metrology and cryptography can be combined in a distributed and/or remote sensing setting, where distant end-users with limited quantum capabilities can employ quantum states, transmitted by a quantum-powerful provider via a quantum network, to perform quantum-enhanced parameter estimation in a private fashion. Previous works on the subject have been limited by restricted assumptions on the capabilities of a potential eavesdropper and the use of abort-based protocols that prevent a simple practical realization. Here we introduce, theoretically analyze, and experimentally demonstrate single- and two-way protocols for distributed sensing combining several unique and desirable features: (i) a safety-threshold mechanism that allows the protocol to proceed in low-noise cases and quantifying the potential tampering with respect to the ideal estimation procedure, effectively paving the way for wide-spread practical realizations; (ii) equivalence of entanglement-based and mutually-unbiased-bases-based formulations; (iii) robustness against collective attacks via a LOCC-de-Finetti theorem, for the first time to our knowledge. Finally, we demonstrate our protocols in a photonic-based implementation, observing that the possibility of guaranteeing a safety threshold may come at a significant price in terms of the estimation bias, potentially overestimating the effect of tampering in practical settings. 
\end{abstract}
\maketitle
\section {Introduction} 
Quantum metrology enables the estimation of physical quantities with precision surpassing classical limits, thanks to the use of coherence and entanglement at the stages of probe preparation and measurement~\cite{Giovannetti2011}. In recent years, advancements in the manipulation and transfer of quantum states over long distances, along with the vision of a fully connected quantum internet, have kindled the study of quantum metrology in a distributed and/or remote setting~\cite{Yin2020,Takeuchi2019,Moore2023,Huang2019a,Shettell2022a,Alan_Rogers_1999,QI2001655,9354971,5955066,PhysRevLett.120.080501,PhysRevLett.121.043604,PhysRevA.97.032329,Kasai2022,expDQS,Zhao2021a,DQSclocks,He2024,Xie18,Ho2024,Hassani_2025,Shettell2022,Bugalho2024,Zang2024,Ho2024}. Here, one or more distant parties aim to estimate some parameters using quantum states transmitted via a public quantum network. Therefore, on top of the enhanced estimation capabilities provided by quantum sensors, one is typically interested in guaranteeing some notion of privacy with respect to other legitimate users and security against potential eavesdroppers.  
Such scenario applies to a variety of practical settings for near- and mid-term quantum technologies, characterized by the presence of a fully quantum service-provider and an end-user client with limited quantum capabilities, e.g., the sensing of private medical data or of a natural phenomenon happening at a remote location.

\begin{figure}[b!]
    \centering
    \includegraphics[width=\columnwidth]{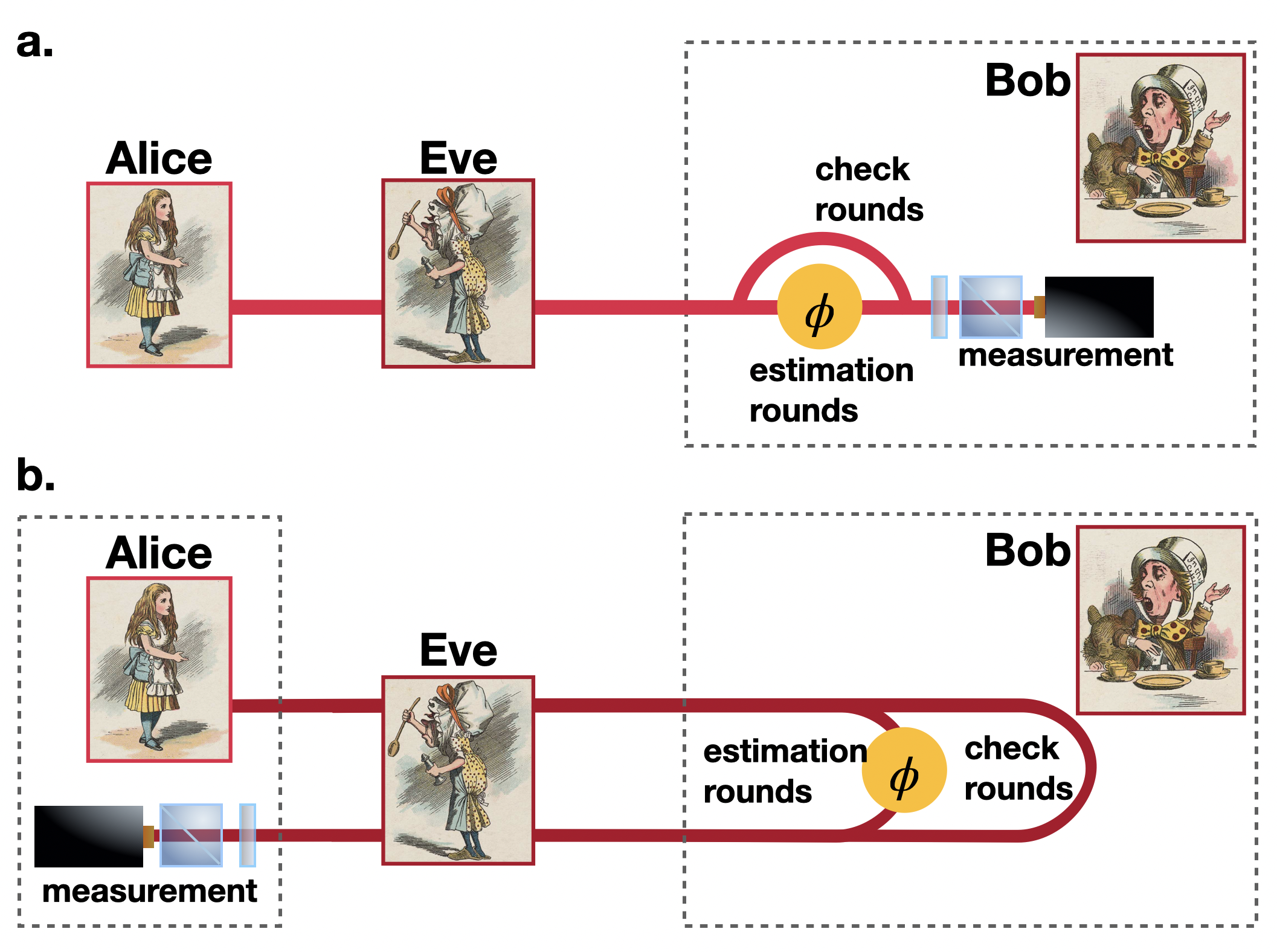}
    \caption{Depiction of the general DQS setting we consider. A provider Alice prepares quantum probe states and sends them to Bob, in order to estimate a parameter $\phi$  at a remote location. The transmission takes place on a quantum channel potentially controlled by a malicious adversary Eve, who can try to: (i) tamper with the probe in order to compromise the estimation; (ii) leak information in order to estimate the phase herself. (a) In the single-way setting, Bob can perform a quantum measurement at his location. (b) In the two-way setting, Bob is a passive party with no measurement capabilities, hence the quantum state has to travel back to Alice after the parameter-encoding stage.}  
    \label{fig:scheme}
\end{figure}

In particular, restricting to the single-user setting, previous works have focused on the security requirement, relying on similar arguments to quantum key distribution (QKD):~\cite{Yin2020,Takeuchi2019} proposed an entanglement-based protocol, 
where the eavesdropper, Eve, is classical, and they can only intercept the measurement outcomes of the estimation procedure; furthermore, in~\cite{Moore2023}, the authors considered the more complex scenario where Eve can intercept both quantum and classical communication, while the provider Alice wants to estimate a phase at a remote location controlled by a honest party Bob, without anyone else acquiring information on the phase itself. 
The general setting, instead, was studied in~\cite{Huang2019a,Shettell2022a}, where the authors introduced single- and two-way estimation protocols between two legitimate users Alice and Bob, that communicate through quantum and classical channels controlled by Eve. However, a key limiting factor towards the implementation of these general protocols is that security can be enforced only at the cost of aborting the estimation procedure as soon as an error is detected. This is achieved by sending decoy or flag states on the quantum channel, and detecting whether they have been altered after transmission; in practice, such a zero-tolerance policy can lead to aborting the protocol at every round due to inevitable experimental errors, such as imperfect measurements or dark counts. 
Finally, the security of the above protocols has been analyzed only under the specific assumption of individual attacks by Eve, which, as in QKD, leaves it open to attacks that employ a quantum memory to establish quantum correlations between probes transmitted during subsequent uses of the channel. 

In this article, we take a unifying approach to distributed quantum sensing (DQS), tackling the above-mentioned issues and providing experimental-friendly protocols for the estimation of a phase-shift assisted by public quantum and classical communication. The protocols are analyzed under the following specific requirements:
\begin{enumerate}
    \item \textit{Faithfulness} The legitimate users can estimate how much Eve has tampered with the ideal probe state, and hence quantify how much the actual estimation bias and error deviate from the ideal ones. Note that this is a combination of soundness and integrity as defined by~\cite{Shettell2022a}.
    \item \textit{Security} The legitimate users can quantify how much information about the estimated parameter has leaked to Eve. 
\end{enumerate}
We present both a single-way protocol, where the probe remains in Bob's lab after phase-encoding, and a two-way protocol, where the probe is sent back to Alice for measurement, discussing how security cannot be guaranteed in the second case, while faithfulness is preserved. We note that our protocols can be equivalently formulated in terms of entanglement or mutually unbiased bases (MUBs), thus bridging previous approaches to the problem. Furthermore, for the first time to our knowledge, we prove that the protocols' properties are robust with respect to collective attacks by Eve, making use of a de Finetti theorem under single-way adaptive measurements due to~\cite{Li2015}. 
Finally, we test our protocols via the implementation of optical-phase estimation using the photon polarization degree of freedom, providing evidence that the faithfulness can be guaranteed, in practice, only at the price of a potentially large bias on the estimated quantity. 

The rest of the article is structured as follows: section II reports on related results, section III provides the theoretical framework, section IV illustrates a proof-of-principle experiment, and finally we conclude with discussion and outlook in section V. The main supporting  methods are presented in section VI, while more technical aspects are referred to the Appendices.

\begin{centering}
\begin{table*}[!]
\centering    \begin{tabular}{|>{\centering\arraybackslash}p{0.17\textwidth}||>{\centering\arraybackslash}p{0.14\textwidth}|>{\centering\arraybackslash}p{0.14\textwidth}|>{\centering\arraybackslash}p{0.14\textwidth}|>{\centering\arraybackslash}p{0.14\textwidth}|>{\centering\arraybackslash}p{0.15\textwidth}|} \hline 
        DQS Protocol &   Quantum communication&Attacks&    Security&Faithfulness&Entanglement-based\\ \hline \hline
         Protocol~\ref{protocol:1w_ent}&   1- \& 2-way&General-coherent&    Perfect (1w), None (2w)&Threshold-based&Yes\\ \hline 
         Protocol~\ref{protocol:1w_mub}&   1- \& 2-way&General-coherent&    Perfect (1w), None (2w)&Threshold-based&No\\ \hline 
         \cite{Yin2020,Takeuchi2019}&   1-way&Classical&    Perfect&A posteriori&Yes\\ \hline
 \cite{Moore2023,He2024}&  1-way&Individual&   Perfect&A posteriori&No\\ \hline
 \cite{Huang2019a}&  1- \& 2-way&Individual&   Perfect (1w), None (2w)&Abort-based&No\\\hline
 \cite{Shettell2022}&  1- \& 2-way&Individual&   Perfect (1w), Zero-tolerance (2w) &Abort-based&No (trap code), Yes (Clifford code)\\\hline
    \end{tabular}
    \caption{Comparison of different single-user DQS protocols in the literature with the ones proposed in this article. Attacks are labelled according to the capabilities of the eavesdropper (Eve); the security is deemed perfect whenever Eve holds a phase-independent completely mixed state, while it is zero-tolerance when based on passing a check. The faithfulness is labelled threshold-based when there is a safety margin for errors detected during the protocol, while it is abort-based when the margin is zero; instead, protocols working a posteriori require a complete tomographic analysis of the received state. }
    \label{tab:my_label}
\end{table*}
\end{centering}

\section{Related works}
Distributed sensing is routinely used in current technologies as a way of monitoring systems, yielding information on their local and global properties as well~\cite{Alan_Rogers_1999,QI2001655,9354971,5955066}. This has prompted the investigation of its extension to quantum sensing~\cite{PhysRevLett.120.080501}, requiring considerations on the optimal distribution of resources~\cite{PhysRevLett.121.043604,PhysRevA.97.032329}. It has been shown in an experiment that sharing entanglement among different partners estimating the average phase in their network is preferable to using local squeezing~\cite{expDQS}, followed by demonstrations with polarisation-entangled states~\cite{Zhao2021a,Kim24}.  In the same vein, adopting a network of atomic clocks linked by entanglement has been identified as a promising platform to realize a quantum-limited global positioning system~\cite{DQSclocks}. 

Following on the integration of quantum metrology with communication, one is naturally led to consider the presence of malevolent agents in the network, being them in the form of eavesdroppers or untrusted nodes. This question has been considered in the experiment of Ref. \cite{Yin2020} based on the proposal in~\cite{Takeuchi2019}. The authors have implemented a secure link between two parties by means of a polarisation-entangled state, as in the scenario of Fig.~ \ref{fig:scheme}. The integrity of the communication link has been verified by quantum state tomography, and, from the knowledge of the state, it has been possible to derive a fiducial statistical model for phase estimation. Crucially, this has shown a large asymmetry in the Fisher information available to the two parties and that potentially accessible to an eavesdropper. In addition, since the receiving party has no information on Alice's control, the protocol is also oblivious to Bob. The work~\cite{Moore2023} has extended this protocol by relaxing the requirement of utilising an entangled state. This can be replaced with random separable states 
sent by Alice. Its experimental realisation in~\cite{He2024} has shown the same asymmetry in Fisher information, with a setup compatible with quantum communication standards. In our article, we show that these two kinds of protocols are equivalent. Furthermore, rather than studying the estimation performance of Eve and Alice in a controlled scenario, we demonstrate Alice's capability of quantifying how much Eve has modified the quantum probe, and therefore setting up a safety threshold for aborting the protocol. 

Ref.~\cite{Huang2019a} has introduced fully cryptographic methods in quantum estimation with qubits, shortly followed by a generalisation to high-dimensional systems in~\cite{Xie18}. The basic idea consists in extending the set of probe states to include decoy events along those that effectively estimate the parameter. This work has introduced the distinction between one-way protocols, in which Bob performs the measurement as well as imparting the parameter, and two-way protocols, in which Bob finally sends his state back to Alice following some possible information scrambling. This cryptographic approach has then been extended in~\cite{Shettell2022a}, enlarging the class of attacks to which the original proposal~\cite{Huang2019a} is resilient. This comes at the cost of considering a large joint state of probe and flag states, which are then encrypted by means of code, in the form of Clifford operations, either collective or separable. This work has also the merit of introducing clear definitions for privacy, soundness, and integrity. Instead, in our article, we strive to keep the protocol simple in terms of state-space dimension, while enlarging the class of attacks to the most general one allowed in QKD, i.e., where Eve can perform collective operations across different rounds of the protocol. This requires a significant extension of the theoretical methods introduced in the literature. 

An interesting application has been presented in~\cite{Shettell2022}, followed by its experimental realisation in~\cite{Ho2024}. This scenario considers the estimation of a global phase parameter in a network, preventing information leakage about the local phase at the individual nodes. The notion of privacy in such quantum sensor networks has been elaborated in Ref.~\cite{Hassani_2025}. Here we restrict to a single-provider/single-user setting, but we expect that our methods can be extended to the multi-user scenario.

A summary of and a comparison with previous works related with our results on single-user DQS is provided in Table~\ref{tab:my_label}.

\section{Theoretical framework}
\subsection{Setting and notation}
Our DQS setting is shown in Fig.~\ref{fig:scheme}: a provider Alice with fully quantum capabilities can exchange quantum states $\ket\psi$ on a quantum channel with a technologically-limited party Bob, placed at a remote sensing location where a parameter $\phi$ can be encoded via a unitary operation $U(\phi)$. Bob's capabilities can range from a semi-passive switch, which essentially turns on or off to determine when the phase gets encoded, and then simply sends the probe quantum state back to Alice, to a quantum-classical sensor that can perform a destructive measurement of the probe and announce the outcome to Alice via a classical channel. Any kind of classical or quantum communication is assumed to be public, and hence controlled by a malicious adversary, Eve, with two distinct aims: 
\begin{enumerate}
    \item \textit{Tampering attack:} disrupting the estimation procedure by modifying the states transmitted on the quantum channel;
    \item \textit{Leaking attack:} extracting information about the estimation procedure, in order to estimate the value of the parameter.
\end{enumerate}
The performance of the protocol is measured in response to these two kinds of attacks: \textit{faithfulness} quantifies the resilience to tampering attacks, and it can be measured by the distance of the ideal state from the tampered state after phase encoding, which bounds the estimation quality~\cite{Shettell2022a}; \textit{security} quantifies the resilience to leaking attacks, and it can be measured by the distance of Eve's share of the quantum state from the maximally mixed one, which allows no estimation at all. 
At variance with previous works on the subject~\cite{Huang2019a,Shettell2022a}, we make no assumptions on Eve's capabilities, hence include the possibility to entangle the transmitted probe states with ancillary ones across multiple repetitions of the protocols and store the ancillae in a quantum memory for later collective quantum processing; this is the largest possible set of causal attacks by a quantum adversary, called general-coherent (GC) attacks in QKD~\cite{Pirandola2019}. 

In the following, without loss of generality, we assume that the probe state is that of an $n$-qubit system, and the parameter-encoding unitary is of the form $U(\phi)^{\otimes n}$ with $ U(\phi) = e^{\ii \phi Y}$, with $Y=\dketbra{R}-\dketbra{L}$ the second Pauli matrix, $\ket{R/L} = (\ket{0}\pm i \ket{1})/\sqrt{2}$ in the computational basis, and $\phi$ the phase to be estimated. It is well-known that an optimal probe for estimating $\phi$ is the entangled state
\begin{equation}
    \ket{\boldsymbol+} = \frac{\ket{\bf 0} + \ket{\bf1}}{\sqrt2},
\end{equation}
where $\ket{\bf i}=\ket{i}^{\otimes n}$ for $i=0,1$, yielding a factor-of-$n$ reduction in the variance with respect to a product state~\cite{Giovannetti2011}.
For later use, we indicate with bold symbols $\bX, \bY, \bZ$ the equivalent of Pauli matrices in the subspace ${\rm span}\{\ket{\bf 0},\ket{\bf 1}\}$ of the $n$-qubit space, in order to distinguish them from the ordinary Pauli matrices $X, Y, Z$.
The fidelity between two quantum states is $F(\rho,\sigma) = ||\sqrt\rho\sqrt\sigma||_1^2$, while the trace-distance is $D(\rho,\sigma) = \frac12||\rho-\sigma||_1$, with $||\cdot||_1$ the Schatten $1$-norm; when $\rho$ and/or $\sigma$ are pure states, we indicate them with the corresponding ket in the arguments of these functions. We indicate with $||\cdot||_\infty$ the operator norm. Where explicitly stated, ${\cal D}(\cH)$ is the space of density matrices on the Hilbert space $\cH$. For a state $\rho\in{\cal D}(\cH^{\otimes(n + m)})$, we indicate as $\Tr_m[\rho]\in{\cal D}(\cH)$ the reduced state obtained by tracing out $m$ parties. 


\onecolumngrid
\begin{centering} 
\begin{minipage}{\textwidth}
\begin{protocol}[One-way entanglement-based DQS]{protocol:1w_ent}
\item\label{step:preparation} \textbf{Preparation:}\\
Alice prepares the entangled state
\begin{equation}
\label{eq:resourcestate}
    \ket{\boldsymbol{\psi_+}}_{AB} = \frac{\ket{0}_A \ket{\bf0}_B + \ket{1}_A \ket{\bf 1}_B}{\sqrt2},
\end{equation}
where $A$ is a single-qubit and $B$ is an $n$-qubit system. \\
She then sends the $B$ register to Bob via the quantum channel. \\
\item \textbf{Encoding:}\label{step:encoding}\\
Bob chooses to perform one of these three actions at random:
\begin{enumerate}[(a)]
    \item with probability $p_c$ he does not encode the phase;\\
    \item with probability $p_e$, he encodes the phase by applying $U(\phi)$ to the received state;\\
    \item with probability $p_d = 1-p_c-p_e$, he discards the received state and moves on to the next round of the protocol.\\
\end{enumerate}
\item\label{step:measurement} \textbf{Measurement:}\\
Alice measures one of the observables $X, Y, Z$ chosen uniformly at random.\\
Bob performs on of these two actions, depending on what he did at step~\ref{step:encoding}:
\begin{enumerate}[(a)]
    \item if he did not encode the phase, he measures one of the observables $\bX, \bY, \bZ$, chosen uniformly at random;\\
    \item if he encoded the phase, then he measures one of the observables $\bX, \bZ$, chosen uniformly at random;\\
\end{enumerate}
\item \textbf{Reconciliation and sifting:}\label{step:reconciliation} \\
After $T$ repetitions of steps \ref{step:preparation}-\ref{step:measurement}, Alice communicates to Bob the observables she measured in each round and the corresponding outcomes. \\
Bob then keeps all rounds in which the following products of observables were measured, discarding the others: 
\begin{enumerate}[(a)]
    \item $X_A \otimes \bZ_B$, $Z_A \otimes \bX_B$, and $Y_A \otimes \bY_B$, if he did not encode the phase;\\
    \item $X_A \otimes \bZ_B$ and $Z_A \otimes \bX_B$, if he encoded the phase.\\
\end{enumerate}
\item \textbf{Fidelity check:}\\
From the sifted rounds (a) where he did not encode the phase, Bob estimates the fidelity of the received state with the ideal state $\ket{\boldsymbol{\psi_+}}$ as follows:
\begin{equation}\label{eq:fidelity_ent}
    \hat F = \frac{1+\ave{X\otimes \bZ}+\ave{Z\otimes \bX}+\ave{Y\otimes \bY}}{4},
\end{equation}
where the expectation is taken with respect to the tampered state.
If $\hat F \geq 1-\epsilon^2$, for a given safety threshold $\epsilon$, then Bob proceeds to the next step \ref{step:estimation}. \\
Otherwise, he aborts the protocol.
\item\label{step:estimation} \textbf{Phase estimation:}\\
From the sifted rounds (b) where he applied $U(\phi)$, Bob estimates the phase as
\begin{equation}
    \frac12\arccos\left(\frac{\ave{X\otimes \bZ}+\ave{Z\otimes \bX}}{2}\right),
    \label{eq:phase_estimator}
\end{equation}
where the expectation is taken with respect to the phase-encoded tampered state.
\end{protocol}
\end{minipage}
\end{centering}
\twocolumngrid

\subsection{DQS protocol}
The main protocol that we will consider is described in Protocol~\ref{protocol:1w_ent} (see also Protocol~\ref{protocol:1w_mub} for an equivalent version based on MUBs). It relies on shared entanglement between the provider, Alice, and the end-user, Bob, with a randomization of phase-encoding and measurements, which enables them to detect and quantify potential tampering from Eve. Furthermore, it is formulated directly in terms of measurements of Pauli-like observables to facilitate experimental realization. Indeed, while Alice keeps a single qubit in her lab, she can send an $n$-qubit probe state to Bob to achieve quantum-enhanced estimation performance. Nevertheless, Bob only needs to measure two-dimensional observables on the $n$-qubit space. 

Note that steps \ref{step:reconciliation}-\ref{step:estimation} can also work in a reverse-reconciliation fashion, i.e., with Bob announcing his measurements, outcomes and check/estimation rounds to Alice, since Eve does not have access to the observables measured by Alice in each round. However, security would be immediately compromised in that case, unless Bob first announces the results of check rounds and, once the check is passed and they are sure that Eve has meddled little with the probe, he announces the results of the estimation rounds. 

The parameters $p_e$, $p_c$ and $p_d$, together with the measurements' randomization, determine the fraction of check, estimation and discarded rounds after sifting: for large $T$, we expect to obtain $N_c \simeq p_e T /3$ sifted check rounds and $N_e \simeq p_c T/3$ sifted estimation rounds, while approximately $N_d \simeq p_d T + (1-p_d) T/3 = (1+2p_d)T/3$ rounds are discarded. These latter rounds are need to ensure that the protocol is robust to GC attacks, satisfying the requirement $N_d\gg N_e + N_c$ via the choice of a suitable $p_d\gg0$. If, instead, one restricts to individual attacks, it is sufficient to take $p_d=0$. 

Furthermore, note that Protocol~\ref{protocol:1w_ent} can be easily extended to a two-way framework, wherein Bob takes on a more passive role: in step~\ref{step:encoding}, he can only decide between actions (a) or (b), after which he sends the $n$-qubit quantum state back to Alice via the quantum channel. The subsequent steps are then all carried out by Alice in her lab, including case (c), i.e., discarding a fraction $p_d$ of the received states. Unfortunately, in this case only faithfulness can be guaranteed, while security is rendered impossible by Eve's interaction with the probe before and after phase-encoding, which can potentially give her full access to the phase information without Alice noticing. Similar observations in~\cite{Shettell2022a} led the authors to devise an encryption-decryption mechanism for Bob's phase-encoding, which significantly complicates the protocol, yielding itself to a large abortion rate in practical noisy scenarios. 

The faithfulness and security guarantees of our proposed protocol are provided in the following Theorem:
\begin{theorem}    \label{thm:main}
The one-way entanglement-based DQS Protocol~\ref{protocol:1w_ent} is perfectly secure and its faithfulness is determined by $\epsilon$. Indeed, the difference of bias and variance of an unbiased estimator $\hat\phi$ calculated on the ideal probe state with respect to the same estimator $\hat\phi'$ on the tampered state are bounded, under general-coherent attacks, as
\begin{align}
    &\left|\E \hat\phi-\E\hat \phi'\right| \leq \frac{\epsilon_0}{n |\sin(2n\phi)|}, \label{eq:bias_gc}\\
    & \left|\Delta^2\hat\phi-\Delta^2 \hat\phi'\right| \leq \frac{(2\epsilon_0+\epsilon_0^2)}{n^2 \sin^2(2n\phi)}.\label{eq:var_gc}
\end{align}
with $\epsilon_0 = \sqrt{\frac23\epsilon^2 + 4 f\left(T,N_d,n\right)}$ and $f(T,x,n) = (T-x-1)\sqrt{\frac{n}{2x}}$.
In the case of individual attacks, one can take $N_d=0$ and employ the same bias bound of  \eqref{eq:bias_gc}  with $\epsilon_0\mapsto\sqrt\frac23\epsilon$, while the variance bound reads
\begin{equation}
    \left|\Delta^2\hat\phi-\Delta^2 \hat\phi'\right| \leq \frac{ (2\epsilon_0/N_e+\epsilon_0^2)}{n^2 \sin^2(2n\phi)}.
\end{equation}
The two-way version of the protocol has similar bounds with $\epsilon_0\mapsto \sqrt{\frac23\epsilon^2 + 4 f\left(T,N_d,n\right)} + |\sin(\min\left\{n\phi,\pi/2\right\})|$.
\end{theorem}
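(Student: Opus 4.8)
\emph{Proof strategy.} I would route all four claims through a single chain. For \emph{security}: in the one-way protocol Eve interacts with register $B$ only while it travels from Alice to Bob, i.e.\ strictly before $U(\phi)$ is ever applied, and the only classical transcript she sees is Alice's announced bases and outcomes on register $A$; since $\phi$ is never applied to $A$ and $A$ never leaves Alice's lab, the reduced state of $A$—hence Alice's measurement statistics—is $\phi$-independent (indeed maximally mixed), so Eve's joint quantum-plus-classical state carries no dependence on $\phi$, which is perfect security. This breaks for the two-way protocol, where $B$ returns through Eve after the encoding, so her ancilla may become $\phi$-correlated.

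For \emph{faithfulness} I would first reduce to controlling $D(\rho_{AB},\ket{\boldsymbol{\psi_+}})$, where $\rho_{AB}$ is the pre-encoding state held jointly by Alice ($A$, pristine) and Bob ($B$, tampered) in a generic round: the phase-encoded state is $(\1_A\otimes U(\phi)^{\otimes n})\,\rho_{AB}\,(\1_A\otimes U(\phi)^{\otimes n})^{\dagger}$, so by unitary invariance of the trace distance its distance to the ideal encoded state is exactly $D(\rho_{AB},\ket{\boldsymbol{\psi_+}})$. The heart of the argument is then that passing the fidelity check forces $D(\rho_{AB},\ket{\boldsymbol{\psi_+}})\le\epsilon_0$. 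In the idealized i.i.d.\ situation this combines (i) a stabilizer-type fidelity-witness analysis of the check statistic \eqref{eq:fidelity_ent}—combined with the normalization of $\rho_{AB}$ and the elementary bound $F(\rho_{AB},\ket{\boldsymbol{\psi_+}})\le\Tr[\Pi\,\rho_{AB}]$, with $\Pi$ the projector onto the two-dimensional subspace containing $\ket{\boldsymbol{\psi_+}}$—which upgrades $\hat F\ge1-\epsilon^2$ to $1-F=\mathcal{O}(\epsilon^2)$, the coefficient $\tfrac23$ emerging once one tracks that only $\langle X\otimes\bZ\rangle$ and $\langle Z\otimes\bX\rangle$ feed the estimator; with (ii) the Fuchs--van de Graaf inequality $D\le\sqrt{1-F}$. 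The real work for general-coherent attacks is to reduce to the i.i.d.\ situation: after symmetrizing over the $T$ rounds—legitimate because Bob assigns check/estimation/discard roles i.i.d.\ at random and the reconciliation is permutation-covariant—the joint state is permutation-invariant, and the $\locc$-de Finetti theorem of~\cite{Li2015}, whose penalty depends on the number $N_d$ of discarded (``sacrificed'') subsystems and on an effective per-round dimension of only $\mathcal{O}(n)$ (because Bob performs two-dimensional measurements rather than full $2^n$-dimensional ones), lets one replace the true $T$-round statistics by those of an i.i.d.\ state $\sigma^{\otimes\bullet}$ at total-variation cost $f(T,N_d,n)$. Feeding $\sigma$ through (i)--(ii) and adding the finite-statistics fluctuations of the $\simeq N_c$ sifted check rounds produces exactly $\epsilon_0=\sqrt{\tfrac23\epsilon^2+4f(T,N_d,n)}$.

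With $D\le\epsilon_0$ in hand, the bias and variance bounds follow by error propagation through the estimator. Any unbiased estimator reads $\phi$ off a bounded $\pm1$-valued signal $\hat S$ whose ideal-state expectation equals $\cos(2n\phi)$ (with unit amplitude for the entangled probe), inverted by $\phi=\tfrac1{2n}\arccos(\cdot)$ with local sensitivity $1/(2n|\sin(2n\phi)|)$; replacing the ideal encoded state by the tampered one changes $\langle\hat S\rangle$ by at most $2\epsilon_0$, giving $|\E\hat\phi-\E\hat\phi'|\le\epsilon_0/(n|\sin(2n\phi)|)$ as in \eqref{eq:bias_gc}. For the variance one writes $\Delta^2\hat\phi-\Delta^2\hat\phi'=\big(\E\hat\phi^2-\E(\hat\phi')^2\big)-\big((\E\hat\phi)^2-(\E\hat\phi')^2\big)$ and propagates the perturbation through both moments: the $\epsilon_0^2$ term is the square of the bias-type deviation, and the $2\epsilon_0$ term (resp.\ $2\epsilon_0/N_e$ under individual attacks) is the perturbation of the sampling variance, which for genuinely i.i.d.\ rounds carries the factor $1/N_e$ of the $\simeq N_e$ sifted estimation rounds. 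The individual-attack statement is exactly this i.i.d.\ analysis with the de Finetti step removed ($N_d=0$, $f=0$) and the $\mathcal{O}(1/\sqrt{N_c})$ Hoeffding fluctuation of the check absorbed into the threshold, leaving $\epsilon_0\mapsto\sqrt{\tfrac23}\,\epsilon$ for the bias. For the two-way protocol, the check—run on rounds where Bob does not encode—only certifies Eve's round-trip map near $\ket{\boldsymbol{\psi_+}}$, not near $(\1_A\otimes U(\phi)^{\otimes n})\ket{\boldsymbol{\psi_+}}$; a triangle inequality through these two reference points costs the extra $\sqrt{1-\cos^2(n\phi)}=|\sin(n\phi)|$, which after capping the rotation angle $n\phi$ at $\pi/2$ yields the stated $|\sin(\min\{n\phi,\pi/2\})|$, while security is lost as explained above.

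The step I expect to be the main obstacle is the de Finetti reduction: one must verify that the hypotheses of~\cite{Li2015} genuinely apply—permutation invariance after symmetrizing the adversary's joint attack, the adaptive one-way-LOCC structure of Alice's and Bob's measurements, the discarded rounds playing the role of the sacrificed subsystems, and, most importantly, that the dimension entering the penalty is the $\mathcal{O}(n)$ effective dimension of Bob's two-dimensional measurements rather than $2^n$ (without which $f$ could not be made small). The remaining ingredients—the fidelity-witness inequality with its explicit constants, Fuchs--van de Graaf, Hoeffding/Chernoff concentration, and the estimator error propagation—are essentially standard.
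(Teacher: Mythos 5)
Your overall architecture matches the paper's: perfect one-way security because Eve never touches a $\phi$-dependent system, reduction of faithfulness to a trace-distance bound via Fuchs--van de Graaf, symmetrization plus the $\locc$ de Finetti theorem of~\cite{Li2015} with the discarded rounds as sacrificed subsystems, error propagation through $\arccos$ for bias and variance, and a triangle inequality costing $|\sin(n\phi)|$ (which the paper derives from Acin's worst-case fidelity $\min_{\ket v}F(\ket v,U\ket v)=\cos^2(\delta(U)/2)$) for the two-way case. The paper also routes the $\tfrac23$ factor through an explicit equivalence lemma between the entanglement-based and MUB-based protocols ($\bar\epsilon^2=\tfrac23\epsilon^2$), which is close in spirit to your fidelity-witness accounting.

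However, the step you yourself flag as the main obstacle contains two genuine gaps. First, you misattribute the $4f(T,N_d,n)$ term to finite-statistics fluctuations of the $N_c$ check rounds; the paper uses no concentration argument there at all. The de Finetti theorem only tells you that the reduced state on $k$ rounds is close to a mixture $\int d\mu(\rho)\,\rho^{\otimes k}$ --- it does not by itself connect what you \emph{measured} on the $N_c$ check subsystems to the state of the $N_e$ estimation subsystems, which is the actual crux. The paper closes this by applying a gentle-measurement lemma for the $\locc$-distance twice (check rounds $\to$ de Finetti measure, then de Finetti measure $\to$ estimation rounds), each application costing $2f$, together with the monotonicity $\int d\mu\,F(\ket\psi,\rho)^{N_e}\geq\int d\mu\,F(\ket\psi,\rho)^{N_c}$ for $N_c\geq N_e$; this is where the $4f$ and the requirement $N_c\geq N_e$ come from, and your sketch has no substitute for it. Second, your claim that the de Finetti penalty involves an ``effective per-round dimension of $\mathcal{O}(n)$'' is wrong as a mechanism: the theorem's penalty is $(k-1)\sqrt{\log d/(2(m-k))}$ with the full $d=2^n$, so the $\sqrt n$ in $f$ is $\sqrt{\log 2^n}$; the role of the $\locc$ restriction is to obtain $\sqrt{\log d}$ scaling instead of the polynomial-in-$d$ scaling of the standard de Finetti theorem, not to shrink the dimension. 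A related, smaller omission is that since the tampered estimation-round state is not i.i.d.\ under general-coherent attacks, bounding $|\langle\hat S\rangle-\langle\hat S\rangle'|$ requires a uniform-continuity statement for sums of product observables with respect to the $\locc$-distance (the paper's Lemma~\ref{lemma:unif_cont}), which is also what forces the variance penalty to scale as $\epsilon_0$ rather than $\epsilon_0/N_e$ in the general-coherent case.
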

The proof of Theorem~\ref{thm:main}, deferred to Section~\ref{sec:methods}, relies on three main steps: (i) showing the equivalence of Protocol~\ref{protocol:1w_ent} to a MUB-based protocol where Alice sends as probe state the eigenstate of a random Pauli matrix and Bob applies corresponding measurements during the check and estimation rounds; (ii) showing that, even in the presence of GC attacks,  the fidelity measured during check rounds quantifies equally well the fidelity to the phase-encoded state used in the estimation rounds, via a suitable de-Finetti theorem; (iii) correcting the bias and variance estimates accordingly.

\section{Experimental implementation and results}
Protocol~\ref{protocol:1w_ent} was tested at the proof-of-principle stage. We did not aim at reproducing all features of a complete  implementation, but rather to obtain experimental data compatible with it. For this purpose, we employed photonic qubits with polarisation encoding by means of the setup presented in Fig.~\ref{fig:exp_setup}. A photon pair is produced by spontaneous parametric down-conversion, and delivered on a photonic control-Z gate in order to generate a state close to  maximal entanglement - this correponds to setting $n=1$ in Alice and Bob's shared resource Eq.~\eqref{eq:resourcestate}.
\begin{figure}
    \centering
    \includegraphics[width=\linewidth]{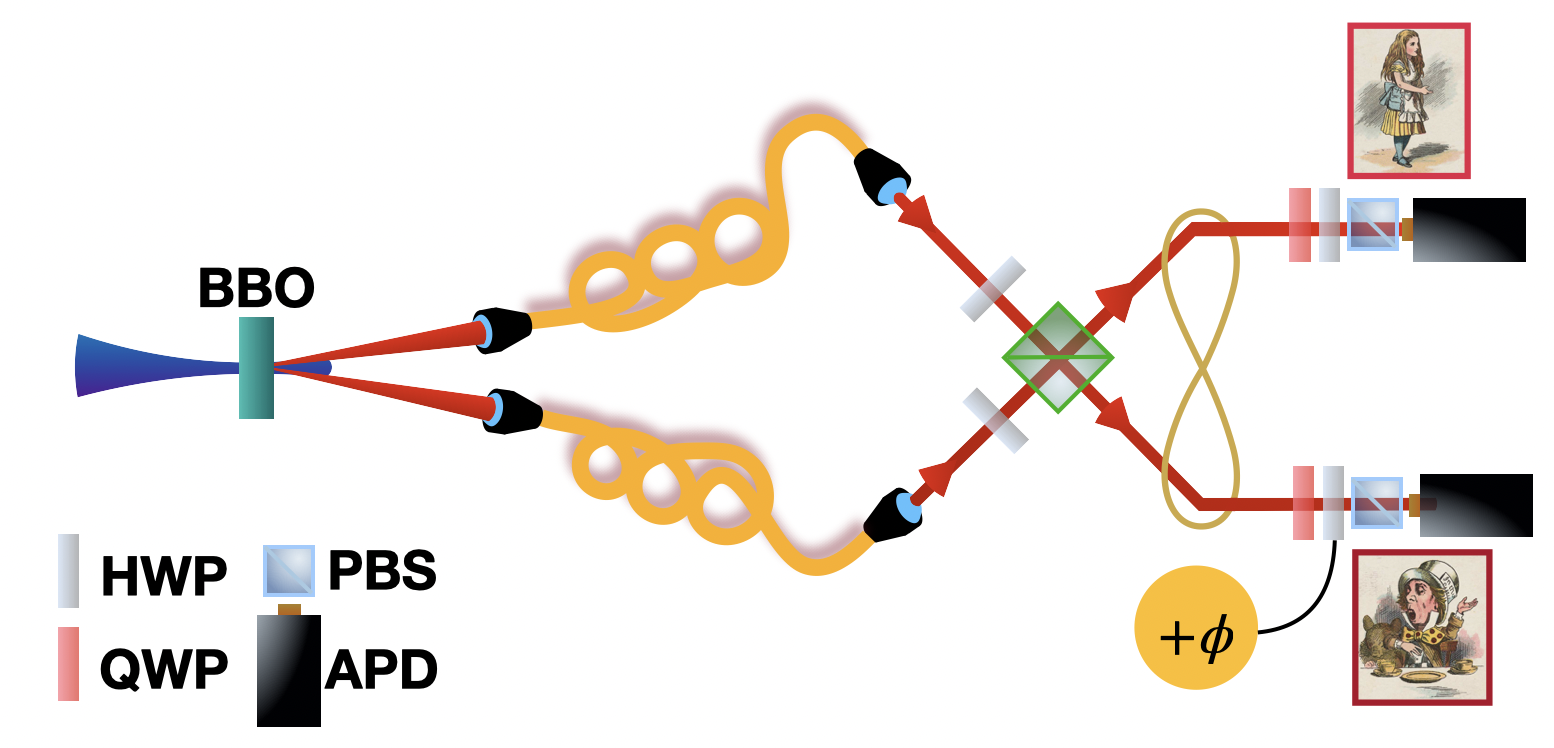}
    \caption{Sketch of the experimental setup: photon pairs generated by SPDC process in a BBO crystal are sent via single-mode optical fibers into a control-Z gate. After the two photons are entangled, their polarization is measured. Here, on estimation rounds, the target phase $\phi$ is imparted on one arm. Detected photons are then delivered to avalanche photodiodes (APDs) single photon detectors via single-mode optical fibers.}
    \label{fig:exp_setup}
\end{figure}
Polarisation measurements, associated to Pauli operators, are carried out as customary, by means of waveplate and polarisers. During the check rounds, the correlators $\ave{X\otimes Z},\ave{Z\otimes X}, \ave{Y\otimes Y}$ are reconstructed, leading to a fidelity $ F = 0.937\pm0.017$ ($\epsilon=0.251\pm0.034$), calculated according to \eqref{eq:fidelity_ent}. We attribute all of the discrepancy to the tampering action of Eve, even if we do not perform attacks actively. As for the estimation rounds, the phase is encoded not by means of an additional object, but by an additional rotation of Bob's plate by an angle $\theta$ - this corresponds to a phase $\phi=2\theta$. The estimator \eqref{eq:phase_estimator} valid for the ideal state has been employed: its bias \eqref{eq:bias_gc} and discrepancy for the variance \eqref{eq:var_gc} have then been studied as a function of $\theta$. The results are presented in Figs.~\ref{fig:dati} and \ref{fig:bias}.

\begin{figure}[h!]
    \centering
    \includegraphics[width= \columnwidth]{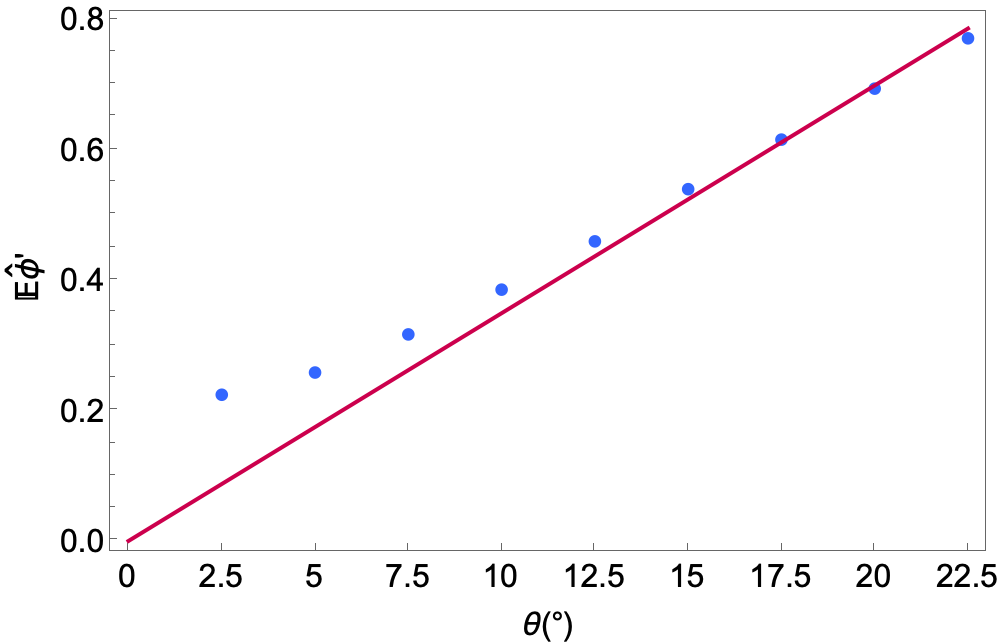}
    \includegraphics[width=\columnwidth]{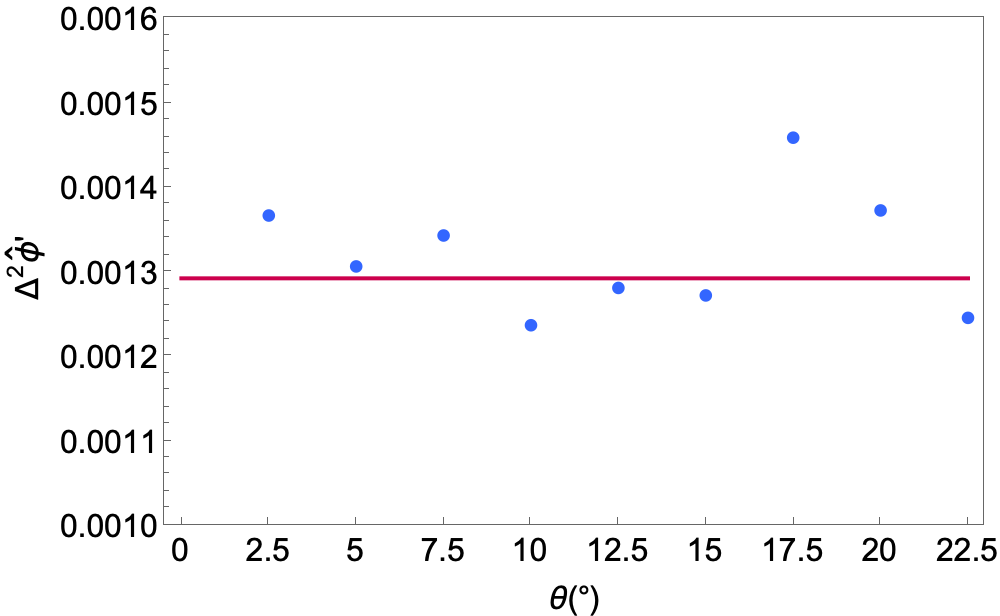}
    \caption{Results of the phase estimation. Top: retrieved value $\mathbb{E} \hat\phi'$ (blue points) as a function of the plate setting $\theta$, compared with the expectation $\phi'=2\theta$ (solid red curve). Bottom: experimental value $\Delta^2 \hat\phi'$ (blue points), compared with the Cram\'er-Rao bound for the ideal maximally entangled state (solid red curve). }
    \label{fig:dati}
\end{figure}

\begin{figure}[h!]
    \centering
    \includegraphics[width=\columnwidth]{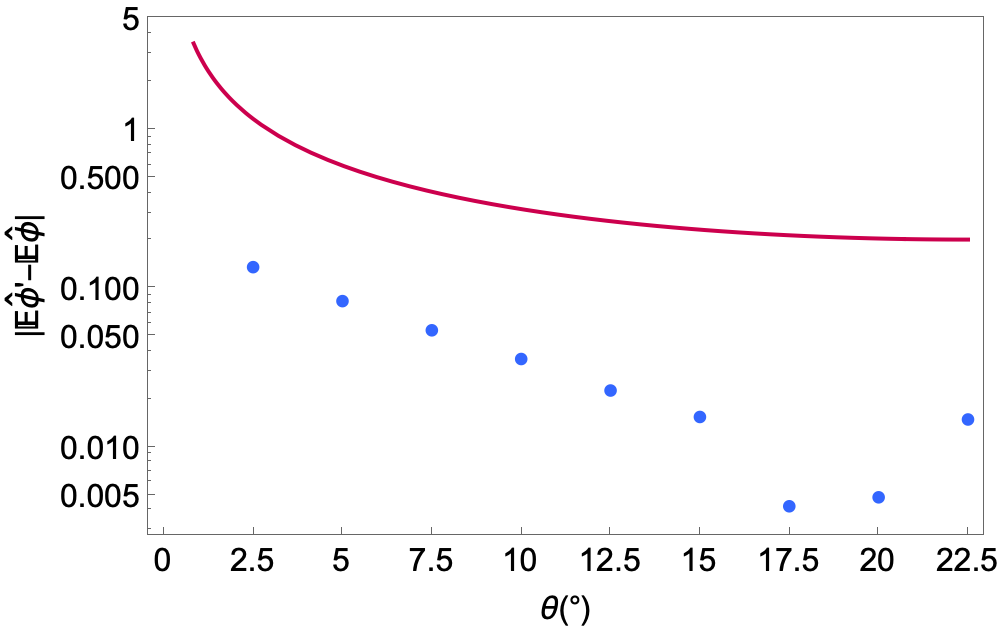}
    \includegraphics[width=\columnwidth]{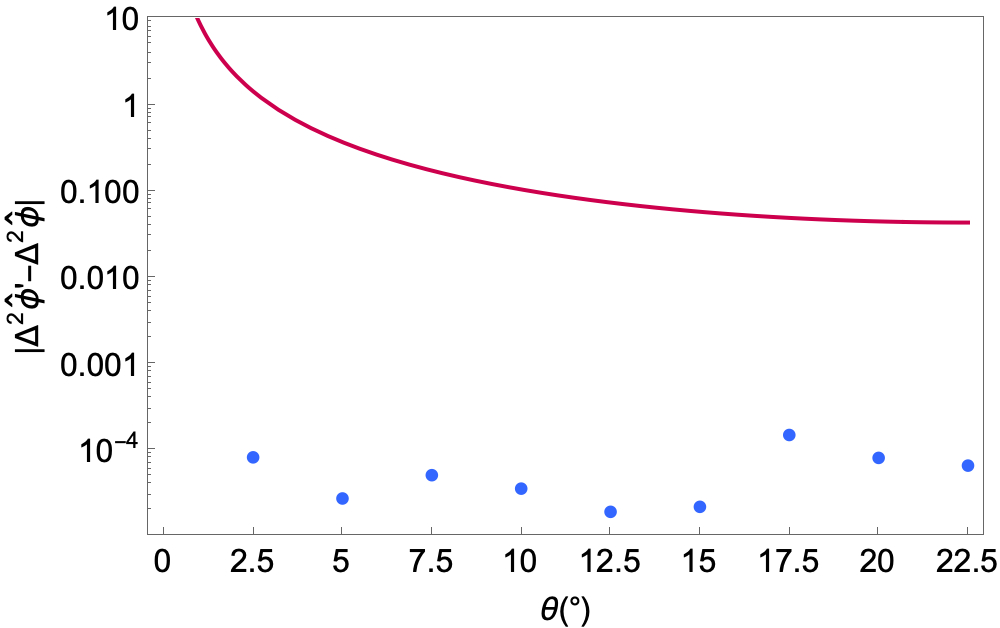}
    \caption{Precision of phase estimation. Top: bias with respect to the ideal case (blue points), compared with the bound \eqref{eq:bias_gc} calculated based on the experimental fidelity (red solid line). Bottom: discrepancy from the ideal case (blue points), compared with the bound \eqref{eq:var_gc} calculated based on the experimental fidelity and the available events (red solid line).}
    \label{fig:bias}
\end{figure}


From Fig.~\ref{fig:dati} we observe that the impact of tampering on phase estimation is more pronounced for small values of $\phi$. However, this is not reflected in the variance, which shows a scattered profile in proximity of the ideal Cram\'er-Rao limit.  Instead, Fig.~\ref{fig:bias} clearly shows that the deviation of bias and variance predicted theoretically from Theorem~\ref{thm:main} tend to significantly overestimate the real ones.  

\section {Discussion and outlook} 
In this article we have proposed a DQS protocol with several desirable features, including: (i) practical security criterion via a threshold-based approach; (ii) faithfulness of the estimation guaranteed against the largest class of collective quantum attacks, i.e., general-coherent attacks; (iii) ease of implementation with minimal resources, via bipartite entangled states or mutually unbiased bases. Our work tackles multiple issues raised in previous works, providing a unified information-theoretic perspective on remote sensing with cryptographic guarantees. We expect that our methods will be applicable to the multi-user scenario, complementing the purely metrological approach of the literature~\cite{Shettell2022,Shettell2022b,Hassani_2025}, as well as to continuous-variable quantum systems where the combination of communication and sensing problems is common~\cite{Bilkis2021,Rosati2021,Rosati2023,Notzel2022,Munar-Vallespir2024}. 

From a practical perspective, our results highlight that the bound on bias and variance with respect to the ideal estimation can significantly overestimate Eve's effective tampering; we attribute this overestimation to the use of trace-distance, which can be easily employed to bound the deviation of bias and variance in theory, while it must be estimated via the fidelity in practice. On one hand, this leaves open the possibility of finding more refined bounds that take into account the actual observable being measured, seeking to employ the measured fidelity directly. On the other hand, we cannot exclude that, without acquiring more specific information on the transmitted probe state, the quantification of metrological performance exclusively via cryptographic parameters (e,g. faithfulness) is fundamentally limited. Future works on the subject should address this conundrum.

Finally, tackling the tradeoff between security, faithfulness, and the protocol's complexity for guaranteeing them might require a thoughtful and accurate characterization of Bob's technological capabilities, also accounting for the distributed and/or remote problem setting. For example, in order to ensure security in the two-way scenario, it might be strictly necessary to require that Bob can create entanglement such as in Ref.~\cite{Shettell2022a}, potentially rendering remote state generation unnecessary. It remains an open question whether other strategies, based on lesser technological demands on Bob's side, might still grant security.

\section{Methods}\label{sec:methods}

\subsection{Theoretical methods}
\subsubsection{MUB-based protocol}
The first step towards proving Theorem~\ref{thm:main} is reducing it to a MUB-based protocol that randomizes the state sent to Bob. Let $\bP\in\{\pm\bX, \pm\bY, \pm\bZ\}$ be a Pauli-like observable, $\ket{\bP,\pm}$ its $\pm1$-eigenvectors, and consider the MUB-based Protocol~\ref{protocol:1w_mub}.
This protocol does not require entanglement and its equivalence to Protocol~\ref{protocol:1w_ent} can be easily shown, since each measurement on Alice's side in the latter determines an instance of the randomly chosen observable $\bP$ in the former  (see Appendix~\ref{app:equivalence} for a proof):
\begin{lemma}\label{lemma:equivalence}
    Protocols~\ref{protocol:1w_ent} and~\ref{protocol:1w_mub} are equivalent, up to a rescaling of the safety threshold $\epsilon^2 = 3\bar\epsilon^2/2$.
\end{lemma}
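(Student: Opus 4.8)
The plan is to exhibit an explicit round-by-round correspondence between the two protocols and then check that (a) the sifted statistics coincide, and (b) the fidelity and phase estimators computed in Protocol~\ref{protocol:1w_ent} are identical to those in Protocol~\ref{protocol:1w_mub}, up to the claimed rescaling of the threshold. The key observation is that in Protocol~\ref{protocol:1w_ent} Alice's system $A$ is a single qubit maximally entangled with the logical qubit $\mathrm{span}\{\ket{\bf 0},\ket{\bf 1}\}$ inside $B$; projecting $A$ onto an eigenstate $\ket{P,\pm}$ of a Pauli $P\in\{X,Y,Z\}$ steers $B$ into the corresponding logical eigenstate $\ket{\bP,\pm}$ (up to the transpose/complex-conjugation bookkeeping that is harmless here since $X,Z$ are real and $Y$ is imaginary). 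Thus Alice measuring $P$ and obtaining outcome $\pm$ is, from Bob's and Eve's perspective, exactly the event ``Alice prepared $\ket{\bP,\pm}$ and sent it through the channel,'' with the uniform choice of $P\in\{X,Y,Z\}$ and the unbiased $\pm$ outcome reproducing the uniform distribution over $\bP\in\{\pm\bX,\pm\bY,\pm\bZ\}$ in Protocol~\ref{protocol:1w_mub}. Because the channel (Eve's GC attack) acts only on the $B$-register and commutes with Alice's later measurement on $A$, one can defer Alice's measurement to the end without changing any joint distribution — this is the standard entanglement-based/prepare-and-measure equivalence, here adapted to the logical subspace.

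The steps I would carry out, in order: first, set up the isometry identifying $\mathrm{span}\{\ket{\bf 0},\ket{\bf 1}\}$ with a single logical qubit and verify that $\ket{\boldsymbol{\psi_+}}_{AB}$ is maximally entangled across the $A$/logical-$B$ cut, so that the steering map sends $\ket{P,\pm}_A \mapsto \ket{\bP,\pm}_B$ for $P\in\{X,Z\}$ and $P=Y$ (checking the conjugation sign for $Y$). Second, match the sifting rules: the retained products $X_A\otimes\bZ_B$, $Z_A\otimes\bX_B$, $Y_A\otimes\bY_B$ in the non-encoding case correspond precisely to the three situations in Protocol~\ref{protocol:1w_mub} where Alice's prepared basis and Bob's measurement basis are ``compatible'' in the MUB sense used there; likewise for $X_A\otimes\bZ_B$, $Z_A\otimes\bX_B$ in the encoding case. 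Third, translate the estimator \eqref{eq:fidelity_ent}: in the MUB picture, conditioning on Alice having prepared $\ket{\bP,+}$ (say), the quantity $\ave{X\otimes\bZ}$ becomes Bob's conditional correlator for measuring $\bZ$ given preparation in the $\bX$-eigenbasis, etc., and summing over the three prepared bases with the normalization of Protocol~\ref{protocol:1w_mub} reproduces the fidelity witness there, up to the factor relating the two normalizations — this is where the $\epsilon^2 = 3\bar\epsilon^2/2$ rescaling enters, since Protocol~\ref{protocol:1w_ent} averages three correlators into a witness of the form $(1+\sum_3)/4$ while the MUB version accumulates the same information with a different prefactor. Fourth, check that the phase estimator \eqref{eq:phase_estimator} is unchanged because only the two correlators $\ave{X\otimes\bZ},\ave{Z\otimes\bX}$ enter and these map identically.

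I expect the main obstacle to be purely bookkeeping rather than conceptual: getting the normalization constant exactly right so that it yields precisely $\epsilon^2 = 3\bar\epsilon^2/2$ and not some other constant, and handling the complex-conjugation subtlety for the $Y$/$\bY$ correlator (the steered state for a $Y$-measurement outcome is the eigenstate of $Y^{T} = -Y$, so a sign must be tracked and shown to cancel against the definition of the retained product $Y_A\otimes\bY_B$). A secondary point requiring care is confirming that the ``discard with probability $p_d$'' branch and the $T$-round structure play identically in both protocols — but since discarding is independent of everything and commutes with Eve's attack, this is immediate. I would relegate the detailed isometry computation and the normalization bookkeeping to Appendix~\ref{app:equivalence}, stating here only the correspondence and the resulting threshold rescaling.
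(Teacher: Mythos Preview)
Your proposal is correct and follows essentially the same approach as the paper's own proof in Appendix~\ref{app:equivalence}: both exploit that Alice's register $A$ never interacts with Eve, so measuring $A$ in a Pauli basis is operationally equivalent to Alice having prepared and sent the corresponding steered eigenstate on $B$, which reduces the entanglement-based correlators $\ave{P_A\otimes\bP_B}$ to the MUB-protocol fidelities $\hat F_\bP$; the paper then derives the same threshold relation $\epsilon^2=3\bar\epsilon^2/2$ by plugging $\ave{\bP}=2\hat F_\bP-1$ into~\eqref{eq:fidelity_ent} and summing over the six $\bP$. The bookkeeping concerns you flag (normalization constant, the $Y$ transpose sign, and the cross-matching of $X_A\leftrightarrow\bZ_B$, $Z_A\leftrightarrow\bX_B$ in the retained products) are exactly the details the paper handles by writing $\ket{\boldsymbol{\psi_+}}$ explicitly in each basis, and they are indeed purely notational.
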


Finally, observe that both protocols can be carried out in a permutation-invariant way with respect to the different rounds, thanks to the randomization of measurements and actions that Alice and Bob undertake; this is crucial for a reduction of GC attacks to individual ones. 
\onecolumngrid
\begin{centering} 
\begin{minipage}{\textwidth}
\begin{protocol}[One-way MUB-based DQS]{protocol:1w_mub}
\item\label{step:preparation} \textbf{Preparation:}\\
Alice prepares the $n$-qubit probe state $\ket{\bP,+}$ for $\bP\in\{\pm\bX, \pm\bY, \pm\bZ\}$ chosen uniformly at random and sends it to Bob via the quantum channel. \\
\item \textbf{Encoding:}\label{step:encoding}\\
Bob chooses to perform one of these three actions at random:
\begin{enumerate}[(a)]
    \item with probability $p_c$ he does not encode the phase;\\
    \item with probability $p_e$, he encodes the phase by applying $U(\phi)$ to the received state;\\
    \item with probability $p_d = 1-p_c-p_e$, he discards the received state and moves on to the next round of the protocol.\\
\end{enumerate}
\item\label{step:measurement} \textbf{Measurement:}\\
Bob performs one of these two actions, depending on what he did at step~\ref{step:encoding}:
\begin{enumerate}[(a)]
    \item if he did not encode the phase, he measures one of the observables $\bX, \bY, \bZ$, chosen uniformly at random;\\
    \item if he encoded the phase, then he measures one of the observables $\bX, \bZ$, chosen uniformly at random;\\
\end{enumerate}
\item \textbf{Reconciliation and sifting:}\label{step:reconciliation} \\
After $T$ repetitions of steps \ref{step:preparation}-\ref{step:measurement}, Alice communicates to Bob the probe state $\ket{\bP,+}$ that she employed at each round. \\
Bob then keeps all rounds in which he measured $\pm\bP$, discarding the others.
\item \textbf{Fidelity check:}\\
From the rounds (a) where he did not encode the phase, Bob estimates the fidelity of the received state with the ideal state $\ket{\bP,+}$ as $\hat F_\bP = (\ave{\bP}+1)/2$, where the expectation is taken with respect to the tampered state. \\
If $\hat F_\bP \geq 1-\bar\epsilon^2$ for all $\bP$ and a given safety threshold $\bar\epsilon$, then Bob proceeds to the next step \ref{step:estimation}. \\
Otherwise, he aborts the protocol.
\item\label{step:estimation} \textbf{Phase estimation:}\\
From the rounds (b) where he applied $U(\phi)$, Bob estimates the phase as
\begin{equation}
    \arccos\left(\frac{1}{4}\sum_{\bP\in\{\pm\bX,\pm\bZ\}}\ave{\bP}\right),
\end{equation}
where the expectation is taken with respect to the phase-encoded tampered state.
\end{protocol}
\end{minipage}
\end{centering}
\twocolumngrid

\subsubsection{Quantifying tampering under GC attacks}
The protocol's faithfulness under GC attacks relies on a de-Finetti theorem for adaptive measurements. The basic idea is that, if we look at a small part of a large entangled system, this is approximately separable. At variance with the standard~\cite{Renner2007} for QKD, here we employ a more refined version of the theorem, due to ~\cite{Li2015} (see also~\cite{Brandao2013}), which applies to states processed using $\locc$ measurements, i.e., local measurements performed in a sequential adaptive fashion; these are defined, along with a suitable distance measures, as follows:

\begin{definition}
    ($\locc$ channel and distance)\\
    Let $\Lambda$ be a quantum channel (CPTP map) acting on $\D(\cH^{\otimes n})$. Then $\Lambda$ is $\locc$ if it can be written as 
     \begin{equation}
        \Lambda = \Lambda_1\otimes\cdots\otimes\Lambda_n,
    \end{equation}
   where $\Lambda_{j}$ for all $j=1,\cdots,n$, are measurement channels, i.e.,
    \begin{equation}
        \Lambda_j(\rho) = \sum_{i}\dketbra{i} \Tr[L_i^{(j)} \rho]
    \end{equation}
    for $\{L_i^{(j)}\}_{i}$ a POVM and $\{\ket i\}_i$ a set of orthonormal states, such that $\Lambda_j = \Lambda_{j|1,\cdots,j-1}$ can depend on the measurement outcomes on the previous $j-1$ parties. 

    Given two quantum states $\rho, \sigma\in\D(\cH^{\otimes n})$, their $\locc$-distance is
    \begin{equation}\label{eq:def_locc_distance}
       D_{\locc}(\rho,\sigma) = \max_{\Lambda\text{  }\locc}  D(\Lambda(\rho),\Lambda(\sigma)).
    \end{equation}
\end{definition}
Clearly, $D_{\locc}$ satisfies the standard properties of a distance. Moreover, $D_{\locc}\leq D$ and a suitable gentle-measurement lemma~\cite{winter1999} for $D_{\locc}$ holds (see Appendix~\ref{app:proofs}):
\begin{lemma}\label{lemma:gentle_measurement}
    (Gentle measurement for $\locc$-distance)\\
    Let $\1\geq E\geq0$ be a measurement operator and $\tau,\sigma$ two quantum states. Then
    \begin{equation}
        \Tr[E\sigma]\geq \Tr[E\tau]-2 D_{\locc}(\tau,\sigma).
    \end{equation}
\end{lemma}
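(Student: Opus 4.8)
\emph{Proof plan.} Despite its name, the statement is not an instance of Winter's gentle-measurement lemma~\cite{winter1999} (which would produce a $\sqrt{\cdot}$ dependence); it is really a data-processing inequality for the $\locc$-distance, and I would prove it directly from the definition of $D_{\locc}$ rather than through any perturbative argument. The idea is to exhibit a \emph{single} $\locc$ channel that turns the comparison of $\Tr[E\tau]$ and $\Tr[E\sigma]$ into a comparison of two classical one-bit distributions, and then invoke $D_{\locc}$ as a maximum over such channels.

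Concretely, I would first package the effect $E$ into the binary measurement $\{E,\1-E\}$ and realize it as an $\locc$ channel $\Lambda_E$ on $\D(\cH^{\otimes n})$. When $E$ acts nontrivially on a single tensor factor — which is the situation in which the lemma is applied, $E$ being the ``yes'' outcome of a local check/estimation measurement — this is immediate: on that factor one takes the measurement channel $\rho\mapsto \dketbra{0}\Tr[E\rho]+\dketbra{1}\Tr[(\1-E)\rho]$, and on every other factor one takes the trivial one-outcome measurement channel $\rho\mapsto\dketbra{0}\Tr[\rho]$, whose POVM is $\{\1\}$; their tensor product has the product form $\Lambda_1\otimes\cdots\otimes\Lambda_n$ required by the definition, with no adaptivity needed, so $\Lambda_E$ is indeed $\locc$.

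Next I would read off the outputs. For any state $\rho$, $\Lambda_E(\rho)$ is a classical state carrying the Bernoulli distribution with success probability $\Tr[E\rho]$ (tensored with a fixed pure state, which does not affect the trace distance), so
\begin{equation}
D\bigl(\Lambda_E(\tau),\Lambda_E(\sigma)\bigr)=\bigl|\Tr[E\tau]-\Tr[E\sigma]\bigr|.
\end{equation}
Since $\Lambda_E$ is one admissible choice in the maximization defining $D_{\locc}$, the right-hand side is at most $D_{\locc}(\tau,\sigma)$; hence $\Tr[E\sigma]\ge\Tr[E\tau]-D_{\locc}(\tau,\sigma)\ge\Tr[E\tau]-2D_{\locc}(\tau,\sigma)$, which already yields the claim (with a factor of $2$ to spare, chosen to match the form of the usual gentle-measurement bound).

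The only genuinely delicate point — and the one I would state carefully — is that the argument crucially uses that the ``success'' measurement is $\locc$: for an arbitrary $0\le E\le\1$ on $\cH^{\otimes n}$ one can only conclude $|\Tr[E\tau]-\Tr[E\sigma]|\le D(\tau,\sigma)$ via Helstrom, and $D(\tau,\sigma)$ may be strictly larger than $D_{\locc}(\tau,\sigma)$. So the lemma must be read (and is applied) for effects $E$ that factorize appropriately across the $n$ registers, exactly as produced by Bob's local check/estimation measurements when one feeds in the $\locc$-closeness coming from the de-Finetti theorem of~\cite{Li2015}. I would also double-check the trivial book-keeping that the one-outcome map $\rho\mapsto\dketbra{0}\Tr[\rho]$ qualifies as a ``measurement channel'' in the sense of the Definition, so that the padded channel $\Lambda_E$ genuinely has the required product structure.
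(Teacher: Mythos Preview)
Your argument is correct and in fact cleaner than the paper's. The paper's proof writes the variational identity $2D_{\locc}(\tau,\sigma)=\max_{\Lambda\ \locc,\ 0\le E\le\1}\Tr[E\,\Lambda(\tau-\sigma)]$ and then lower-bounds the maximum by taking $\Lambda={\rm Id}$ together with the given $E$, obtaining the stated inequality in one line. You instead exhibit an explicit $\locc$ channel realizing the binary POVM $\{E,\1-E\}$ and compare the resulting classical two-point distributions; this addresses head-on the point you flag---that the identity map is not of the product-of-measurement-channels form required by the paper's own definition of $\locc$---and even yields the bound without the factor of $2$. Your diagnosis that the inequality genuinely requires $\{E,\1-E\}$ to be $\locc$-implementable (else one can only invoke the full trace distance) is spot on.

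One small correction on the application: in the proof of Theorem~\ref{thm:distance_bound_1w_mub} the effect is $E=(\dketbra{\psi})^{\otimes N_c}$ (respectively $N_e$), a tensor product across \emph{all} retained copies, not an operator acting nontrivially on a single factor as you wrote. Your construction extends immediately---on every factor take $\rho\mapsto\dketbra{0}\bra\psi\rho\ket\psi+\dketbra{1}\Tr[(\1-\dketbra\psi)\rho]$ and observe that the ``all-zeros'' outcome has probability $\Tr[E\rho]$, so $D(\Lambda(\tau),\Lambda(\sigma))\ge|\Tr[E\tau]-\Tr[E\sigma]|$---so your caveat is exactly right, just with ``product across the registers'' rather than ``single factor''.
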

Furthermore, note that our Protocol~\ref{protocol:1w_mub} is clearly contained in the $\locc$ measurement subset, which guarantees a better scaling of the error term. 

We will make use of the following de Finetti theorem for the $\locc$-distance:
\begin{theorem}\label{thm:de_finetti}
    (Quantum de Finetti for $\locc$-distance~\cite[Theorem 1]{Li2015})\\
    Let $\sigma\in\D(\cH^{\otimes n})$ be a permutation-invariant state, and $d={\rm dim}(\cH)$. Then for any integer $1\leq k\leq m$ there exists a measure $\mu$ on $\D(\cH)$ such that
    \begin{equation}
        D_{\locc}\left(\Tr_{m-k}[\sigma],\int d\mu(\rho) \rho^{\otimes k}\right) \leq (k-1)\sqrt{\frac{\log d}{2(m-k)}}.
    \end{equation}
\end{theorem}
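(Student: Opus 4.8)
The plan is to follow the information-theoretic route to measured de Finetti theorems, reducing the bound on the adaptive-measurement distance $D_{\locc}$ to a chain-rule estimate of quantum conditional mutual information followed by Pinsker's inequality, never leaving the measured side. Since $D_{\locc}(\Tr_{m-k}[\sigma],\cdot)$ is a supremum over product channels $\Lambda=\Lambda_1\otimes\cdots\otimes\Lambda_k$ acting on the $k$ retained systems, with each $\Lambda_j$ allowed to depend on the outcomes of $\Lambda_1,\dots,\Lambda_{j-1}$, the first move is to construct the measure $\mu$ \emph{before} fixing any test. I would obtain $\mu$ from a fixed reference measurement on the $m-k$ traced-out systems of $\sigma$: conditioning on its classical outcome $y$ (with probability $p(y)$) leaves, by permutation invariance, a common single-system conditional state $\rho_y$, and I set $\mu=\sum_y p(y)\,\delta_{\rho_y}$. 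Because this reference measurement does not depend on the test $\Lambda$, the measure $\mu$ is universal, as required by the $\max_\Lambda$ sitting inside $D_{\locc}$, and convexity of $D_{\locc}$ in $y$ reduces the claim to bounding, for each $y$, the measured distance between the conditional state of the retained block and $\rho_y^{\otimes k}$.

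Second, I would fix an arbitrary $\locc$ test $\Lambda$ and compare the induced classical distribution $P_\Lambda(x_1,\dots,x_k)$ with its conditionally-product counterpart. The deviation is exactly the conditional total correlation of the outcomes, which decomposes telescopically as $\sum_{j=2}^{k} I(X_j:X_{<j}\mid Y)$; by the data-processing inequality each term is dominated by the corresponding quantum conditional mutual information of $\sigma$. The key estimate is then the chain rule: for the permutation-invariant $\sigma$ one has $I(A_1:A_2\cdots A_m)=\sum_{j=2}^{m} I(A_1:A_j\mid A_2\cdots A_{j-1})\le S(A_1)\le\log d$, so on average each of the $m-1$ conditional terms is at most $\log d/(m-1)$. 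Permutation symmetry makes every position equivalent, so by averaging over which $m-k$ systems play the role of reference I can extract $k-1$ conditional terms each bounded by $\log d/(m-k)$, corresponding to introducing the $k$ retained systems one at a time.

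Third, I would apply Pinsker's inequality termwise: a conditional mutual information at most $\log d/(m-k)$ yields total-variation distance at most $\sqrt{\log d/(2(m-k))}$ between the measured joint distribution and the conditionally-product one. The triangle inequality over the $k-1$ telescoping steps then accumulates the prefactor $(k-1)$. Since this estimate is uniform in the test $\Lambda$ and $\mu$ was fixed beforehand, taking $\max_\Lambda$ and averaging over $y$ delivers
\begin{equation}
D_{\locc}\!\left(\Tr_{m-k}[\sigma],\int d\mu(\rho)\,\rho^{\otimes k}\right)\le (k-1)\sqrt{\frac{\log d}{2(m-k)}}.
\end{equation}

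The main obstacle I anticipate is keeping the dimension dependence at $\log d$ rather than the $\mathrm{poly}(d)$ of the trace-distance de Finetti of~\cite{Renner2007}: this hinges on never converting to full trace distance while passing from mutual information to statistical distance, so every inequality—data processing for conditional mutual information, the chain rule, and Pinsker—must be carried on the measured/$\locc$ side. The adaptivity of the test is in fact an asset here, since the outcome-dependent ordering of the $\Lambda_j$ lines up precisely with the quantum chain rule that drives the telescoping; the delicate points are to verify that data processing survives this outcome-dependent composition and that conditioning on the \emph{fixed} reference measurement pins down $\rho_y$ tightly enough that the retained block is $\locc$-close to $\rho_y^{\otimes k}$ for \emph{every} subsequent $\Lambda$, so that the single universal $\mu$ indeed works against the outer supremum.
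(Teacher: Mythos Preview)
The paper does not prove this statement at all: Theorem~\ref{thm:de_finetti} is quoted as an external result from~\cite{Li2015} (with an antecedent in~\cite{Brandao2013}) and used as a black box inside the proof of Theorem~\ref{thm:distance_bound_1w_mub}. There is therefore no ``paper's own proof'' to compare against.

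Your sketch is a reasonable outline of the information-theoretic argument that actually appears in those references: chain rule for (conditional) mutual information on the permutation-invariant state, data processing to pass from quantum to measured quantities, termwise Pinsker, and a telescoping triangle inequality producing the $(k-1)$ prefactor. One point you correctly flag as delicate deserves sharpening: for a single measure $\mu$ to work against the \emph{outer} supremum over all $\locc$ tests $\Lambda$, the ``fixed reference measurement'' on the $m-k$ discarded systems cannot be arbitrary. In~\cite{Li2015} this is handled by taking the reference measurement to be informationally complete, so that conditioning on its outcome genuinely localizes the single-copy state $\rho_y$; an arbitrary fixed measurement would not guarantee that the retained block is $\locc$-close to $\rho_y^{\otimes k}$ uniformly in $\Lambda$. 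Apart from this missing specification, your plan matches the original proof strategy.
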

Using these instruments, we prove that Bob or Alice can bound the fidelity to the ideal state on the $N_e$ estimation rounds via the the same quantity on the $N_c$ check rounds. This, in turn, implies a bound on the $\locc$-distance that quantifies Eve's tampering. 
\begin{theorem}\label{thm:distance_bound_1w_mub}(Tampering quantification for MUB-based protocol under GC attacks)\\
In Protocol~\ref{protocol:1w_mub}, let $N_c\geq N_e$ , $V(\phi) = U(\phi)^{\otimes N_e}$, and consider a random probe-state $\ket{\bP,+}$ prepared by Alice, for any $\bP\in\{\pm\bX,\pm\bY,\pm\bZ\}$. Then the $\locc$-distance between the ideal and tampered state after phase-encoding can be bounded in terms of the fidelity $\hat F_\bP$ as follows:
\begin{equation}
\begin{aligned}\label{eq:distance_bound_1w_mub_gc}
    & D_{\locc}(V(\phi)\ket{\bP,+}^{\otimes N_e},V(\phi)\Tr_{T-N_e}[\tilde\sigma] V(\phi)^\dagger)\\
    &\leq \sqrt{1-\hat F_\bP +4f(T,N_d,n)}.
\end{aligned}\end{equation}
where $\tilde\sigma = \Phi((\dketbra{\bP,+})^{\otimes T})$ is the collective tampered state after $T$ rounds of the protocol, with $\Phi$ the Eve-controlled collective quantum channel.
\end{theorem}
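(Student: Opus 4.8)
The plan is to reduce the general-coherent attack to an effectively i.i.d.\ one by an $\locc$-tailored de Finetti argument, and then to transport the fidelity certified on the check rounds onto the phase-encoded estimation rounds. First, since $V(\phi)=U(\phi)^{\otimes N_e}$ is a product unitary and composing an $\locc$ channel with a product unitary yields another $\locc$ channel, $D_{\locc}$ is invariant under conjugation by $V(\phi)$, so it suffices to treat $\phi=0$, i.e.\ to bound $D_{\locc}\big(\ket{\bP,+}^{\otimes N_e},\Tr_{T-N_e}[\tilde\sigma]\big)$. Next, because the role (check / estimate / discard) and the local observable chosen in each of the $T$ rounds of Protocol~\ref{protocol:1w_mub} are drawn by Bob uniformly at random and independently of Eve, no statistic of the protocol changes if $\tilde\sigma$ is symmetrized over permutations of the $T$ rounds; I therefore assume $\tilde\sigma$ permutation-invariant on $\cH^{\otimes T}$, with $d=\dim\cH=2^n$ and $\log d=n$.

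The heart of the argument is Theorem~\ref{thm:de_finetti} applied with $m=T$ and $k=N_c+N_e=T-N_d$, so that $m-k=N_d$: there is a measure $\mu$ on $\D(\cH)$ with
\[
D_{\locc}\Big(\Tr_{N_d}[\tilde\sigma],\int d\mu(\rho)\,\rho^{\otimes(N_c+N_e)}\Big)\le(N_c+N_e-1)\sqrt{\tfrac{n}{2N_d}}=f(T,N_d,n).
\]
Tracing out the $N_c$ check rounds and using that $D_{\locc}$ is non-increasing under partial trace (``discard-then-$\locc$'' is again an $\locc$ channel), the estimation-round marginal satisfies $D_{\locc}\big(\Tr_{T-N_e}[\tilde\sigma],\int d\mu(\rho)\rho^{\otimes N_e}\big)\le f(T,N_d,n)$. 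To see that $\hat F_\bP$ controls this mixture, note that the check statistic is, in expectation over Bob's measurement choices, the $\locc$-measurable observable $E=\tfrac1{N_c}\sum_{j\in\mathrm{check}}(\dketbra{\bP,+})_j\otimes\1$ with $0\le E\le\1$, whose value on $\Tr_{N_d}[\tilde\sigma]$ is $\hat F_\bP$ and on $\int d\mu(\rho)\rho^{\otimes(N_c+N_e)}$ is $\bar F:=\int d\mu(\rho)\langle\bP,+|\rho|\bP,+\rangle$; Lemma~\ref{lemma:gentle_measurement} then gives $\bar F\ge\hat F_\bP-2f(T,N_d,n)$, i.e.\ $\int d\mu(\rho)\big(1-\langle\bP,+|\rho|\bP,+\rangle\big)\le1-\hat F_\bP+2f(T,N_d,n)$. (Here the hypothesis $N_c\ge N_e$ is used: by exchangeability of the retained rounds, a check carried out on at least as many rounds as are estimated is what makes the estimation marginal controllable at all.)

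It remains to convert the average single-round infidelity into an $\locc$-distance bound on the $N_e$-round states. Fuchs--van de Graaf gives $D(\ket{\bP,+},\rho)\le\sqrt{1-\langle\bP,+|\rho|\bP,+\rangle}$ for each $\rho$, and the key lemma is that, under the round-by-round $\locc$ channel that the estimation stage applies, $D_{\locc}\big(\ket{\bP,+}^{\otimes N_e},\int d\mu(\rho)\rho^{\otimes N_e}\big)$ is governed by the \emph{average single-round} infidelity rather than its $N_e$-fold accumulation: intuitively an adaptive sequence of single-round measurements can learn $\rho$ but, because of the product structure, cannot amplify a small per-round deviation into the quantities (bias, variance) that feed into Theorem~\ref{thm:main}. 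Granting this, joint convexity of $D_{\locc}$ together with concavity of $\sqrt{\cdot}$ gives $D_{\locc}\big(\ket{\bP,+}^{\otimes N_e},\int d\mu(\rho)\rho^{\otimes N_e}\big)\le\sqrt{1-\hat F_\bP+2f(T,N_d,n)}$, and combining with the de Finetti bound by the triangle inequality — absorbing the residual $f(T,N_d,n)$ inside the square root, which is legitimate since all quantities of interest are $\ll1$ — yields $\sqrt{1-\hat F_\bP+4f(T,N_d,n)}$; re-inserting $V(\phi)$ finishes the proof.

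I expect the last point to be the real obstacle: making precise how the certified single-round fidelity bounds the joint $N_e$-round $\locc$-distance (and hence the estimator's bias and variance) without paying a factor of $N_e$. This is exactly why the $\locc$-restricted de Finetti theorem of~\cite{Li2015}, rather than the standard one, is needed — it keeps the error term $f(T,N_d,n)$ polynomial in $n$ and hence negligible once $N_d\gg n(N_c+N_e)^2$. A secondary bookkeeping nuisance is tracking the $O(1/\sqrt{N_c})$ statistical fluctuations of $\hat F_\bP$ and the projection onto the logical qubit subspace when identifying $E$ with the measured check statistic.
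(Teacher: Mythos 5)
Your reduction to $\phi=0$, the symmetrization of $\tilde\sigma$, and the application of Theorem~\ref{thm:de_finetti} with $m=T$, $k=T-N_d$ all match the paper. The proof breaks at the step you yourself flag as "the real obstacle": you bound only the \emph{average single-round} infidelity $\int d\mu(\rho)\,(1-\bra{\bP,+}\rho\ket{\bP,+})$ via the averaged observable $E=\tfrac1{N_c}\sum_j(\dketbra{\bP,+})_j\otimes\1$, and then invoke an unproven "key lemma" asserting that $D_{\locc}\big(\ket{\bP,+}^{\otimes N_e},\int d\mu(\rho)\rho^{\otimes N_e}\big)$ is controlled by that per-round quantity with no $N_e$-fold accumulation. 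That lemma is false. Take $\mu$ a point mass on $\rho=\dketbra{\psi'}$ with $|\langle\bP,+|\psi'\rangle|^2=1-\delta$: the $\locc$ strategy "measure $\{\dketbra{\bP,+},\1-\dketbra{\bP,+}\}$ on each copy and accept iff all outcomes are yes" already gives $D_{\locc}(\ket{\bP,+}^{\otimes N_e},\rho^{\otimes N_e})\geq 1-(1-\delta)^{N_e}$, which is of order $\min\{1,N_e\delta\}$ and exceeds $\sqrt{\delta}$ as soon as $N_e\gtrsim\delta^{-1/2}$. Adaptivity is irrelevant here; local distinguishability accumulates linearly in the number of copies, so no convexity-plus-Jensen argument can rescue the claimed $\sqrt{1-\hat F_\bP+2f}$ bound from first-moment information alone.

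The paper avoids this entirely by never descending to single-round fidelities. It applies Lemma~\ref{lemma:gentle_measurement} with the \emph{joint} projector $E=(\dketbra{\bP,+})^{\otimes N_c}$ on the check marginal, obtaining $\int d\mu(\rho)\,F(\ket{\bP,+},\rho)^{N_c}\geq F(\ket{\bP,+}^{\otimes N_c},\Tr_{T-N_c}[\tilde\sigma])-2f$; it then uses the elementary inequality $x^{N_e}\geq x^{N_c}$ for $x\in[0,1]$ (this is where $N_c\geq N_e$ actually enters — not the exchangeability heuristic you give) to pass to $\int d\mu(\rho)\,F(\ket{\bP,+},\rho)^{N_e}$; a second gentle-measurement application with $E=(\dketbra{\bP,+})^{\otimes N_e}$ converts this back into the joint fidelity $F(\ket{\bP,+}^{\otimes N_e},\Tr_{T-N_e}[\tilde\sigma])$ at the cost of another $2f$ (whence the $4f$, rather than your triangle-inequality absorption); only then is Fuchs--van de Graaf applied, to the joint $N_e$-round fidelity. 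To repair your argument you would need to replace your averaged check observable by the product projector and follow this chain; as written, the central step has a genuine, irreparable gap.
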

\begin{proof}
    Define $\ket\psi = \ket{\bP,+}$. We want to bound the $\locc$-distance between the ideal and tampered state during estimation rounds:
\begin{align}\label{eq:bound_with_fidelity_estimation}
    & D_{\locc}(V(\phi)\ket{\psi}^{\otimes N_e},V(\phi)\Tr_{T-N_e}[\tilde\sigma] V(\phi)^\dagger) \\
    &= D_{\locc}(\ket{\psi}^{\otimes N_e},\Tr_{T-N_e}[\tilde\sigma] ) \\
    &\leq \sqrt{1-F(\ket{\psi}^{\otimes N_e},\Tr_{T-N_e}[\tilde\sigma] ) }.
\end{align}
Since Bob's actions over $T$ rounds are permutation-invariant, the state $\tilde\sigma$ can be taken permutation-invariant too without loss of generality, hence Theorem~\ref{thm:de_finetti}  (with $m=T$, $k=T-N_d$ and $d=2^n$) implies the existence of a measure $d\mu(\rho)$ on states $\rho\in\D(\cH)$ such that{\small
\begin{equation}\label{eq:de_finetti_applied}
    D_{\locc}\left(\tilde\sigma,\int d\mu(\rho) \rho^{\otimes T}\right) 
   \leq f(T,N_d,n):=(T-N_d-1)\sqrt{\frac{n}{2 N_d}}.
\end{equation}
}We then apply Lemma \ref{lemma:gentle_measurement} twice: firstly, for the check rounds with $\tau=\Tr_{T-N_c}[\tilde\sigma]$, $\sigma=\int d\mu(\rho) \rho^{\otimes N_c}$ and $E=(\dketbra{\psi})^{\otimes N_c}$, obtaining
{\small
\begin{equation}\label{eq:gentle_check}
    \int d\mu(\rho) F(\ket{\psi},\rho)^{N_c} \geq F(\ket{\psi}^{\otimes N_c},\Tr_{T-N_c}[\tilde\sigma]) - 2 f(T,N_d,n),
\end{equation}
}where in the second term we have used \eqref{eq:de_finetti_applied} and the non-increasing property of trace distance under the map $\Tr_{T-N_c}$; secondly, for the estimation rounds with $\tau=\int d\mu(\rho) \rho^{\otimes N_e}$, $\sigma=\Tr_{T-N_e}[\tilde\sigma]$ and $E=(\dketbra{\psi})^{\otimes N_e}$, obtaining
 {\small
 \begin{align}\label{eq:gentle_estimation}
     F(\ket{\psi}^{\otimes N_e},\Tr_{T-N_e}[\tilde\sigma] ) \geq  \int d\mu(\rho) F(\ket{\psi},\rho)^{N_e} - 2 f(T,N_d,n). 
 \end{align}
}Finally, putting (\ref{eq:gentle_check},\ref{eq:gentle_estimation}) together and using the fact that $x^{N_e}\geq x^{N_c}$ for $1\geq x\geq 0$ and $N_c\geq N_e$ we obtain
{\small
\begin{equation}\label{eq:fidelity_1w_gc}
    F(\ket{\psi}^{\otimes N_e},\Tr_{T-N_e}[\tilde\sigma] ) \geq F(\ket{\psi}^{\otimes N_c},\Tr_{T-N_c}[\tilde\sigma]) - 4 f(T,N_d,n).
\end{equation}
}In the above expression, the first term on the right-hand side can be estimated by $\hat F_\bP$. Hence, plugging the expression into  \eqref{eq:bound_with_fidelity_estimation}, we obtain \eqref{eq:distance_bound_1w_mub_gc}.
\end{proof}
The bound ensures that the estimate of Eve's tampering during check rounds is also a good estimate for the estimation rounds, even in the case of GC attacks that can introduce correlations between different rounds. This can be guaranteed by making $f$ sufficiently small, i.e., discarding $N_d\gg N_c\geq N_e$ rounds. 

\subsubsection{$\locc$ bounds on bias and variance}
The last step is to employ Theorem~\ref{thm:distance_bound_1w_mub} to quantify how much the phase-estimation results using a tampered state differ from the ideal case, as measured by bias and variance to the true phase. In order to do this, we first have to show that the $\locc$-distance can be used to bound the deviation of local observables on collective states.
\begin{lemma}\label{lemma:unif_cont}
    (Uniform continuity of expectation with respect to $\locc$-distance)\\
    Let $A$ be a local observable on $\D(\cH^{\otimes m})$, given by the sum of at most $K$ product observables bounded by $a$, i.e., $A= \sum_{k=1}^K A_{1,k}\otimes\cdots\otimes A_{m,k}$ with $A_{j,k}$ observables on $\D(\cH)$ for all $j=1,\cdots,m$ and $k=1,\cdots,K$, and $||A_{1,k}\otimes\cdots\otimes A_{m,k}||_\infty\leq a$. Then for any two states $\sigma,\sigma'\in\D(\cH^{\otimes m})$ it holds
    \begin{equation}
        \left|\Tr[A(\sigma-\sigma')]\right| \leq 2 a K\cdot D_{\locc}(\sigma,\sigma').
    \end{equation}
\end{lemma}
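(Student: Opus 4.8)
The plan is to reduce the bound for the whole observable $A$ to bounds for each product term $A_{1,k}\otimes\cdots\otimes A_{m,k}$ by the triangle inequality, and then show that a single product observable, once suitably normalized, corresponds (up to sign and scale) to the acceptance probability of an $\locc$ measurement channel. Concretely, write $\left|\Tr[A(\sigma-\sigma')]\right| \leq \sum_{k=1}^K \left|\Tr[(A_{1,k}\otimes\cdots\otimes A_{m,k})(\sigma-\sigma')]\right|$, so it suffices to prove that each summand is at most $2a\, D_{\locc}(\sigma,\sigma')$; summing the $K$ terms then yields the factor $2aK$.

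For a single product observable $B = B_1\otimes\cdots\otimes B_m$ with $\|B\|_\infty\le a$, the key step is to realize $\Tr[B\sigma]$ via a sequential adaptive local measurement. First rescale: set $\tilde B_j = (B_j + b\1)/(2b)$ where $b=\max_j\|B_j\|_\infty$, so that $0\le\tilde B_j\le\1$ and each $\tilde B_j$ is one effect of a two-outcome local POVM $\{\tilde B_j,\1-\tilde B_j\}$. Measuring these POVMs independently on the $m$ parties — which is a (non-adaptive, hence a fortiori $\locc$) channel $\Lambda$ mapping onto $m$ classical bits — produces the all-ones outcome with probability $\Tr[(\tilde B_1\otimes\cdots\otimes\tilde B_m)\sigma]$. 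Expanding $\tilde B_1\otimes\cdots\otimes\tilde B_m = (2b)^{-m}\prod_j(B_j+b\1)$ and collecting the term $(2b)^{-m}B_1\otimes\cdots\otimes B_m$ against the remaining $2^m-1$ cross terms, one sees that $\Tr[B\sigma]$ is a fixed affine combination of the probabilities of the $2^m$ outcomes of $\Lambda$ applied to $\sigma$. Since $D_{\locc}$ is by definition the supremum over $\locc$ channels of the trace distance of the outputs, and the output of $\Lambda$ is a classical distribution, $\sum_{\text{outcomes}}\left|p_\sigma(\text{out})-p_{\sigma'}(\text{out})\right| = 2 D(\Lambda(\sigma),\Lambda(\sigma')) \le 2 D_{\locc}(\sigma,\sigma')$; hence the difference of \emph{any} affine functional of the outcome probabilities with coefficients bounded by $a$ in absolute value is controlled by $2a\,D_{\locc}(\sigma,\sigma')$.

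A cleaner variant, which I would actually use to avoid bookkeeping the cross terms, is this: for each product term define the local two-outcome POVM $\{F_j^{+},F_j^{-}\}$ with $F_j^{\pm}=(\1\pm B_j/b)/2$, measure all $m$ of them (non-adaptively, so certainly $\locc$), and let $f:\{+,-\}^m\to\mathbb R$ be $f(s)=b^m\prod_j s_j$. Then $\Tr[B\sigma]=\sum_s f(s)\,p_\sigma(s)$ with $|f(s)|=b^m$; but note $b^m\le a$ only if we are careful, so instead one should rescale so the product telescopes exactly to $\|B\|_\infty$. The safest route is the first argument: expand $B=\sum_{s\in\{+,-\}^m} c_s\,(\bigotimes_j F_j^{s_j})$ — possible since the $2^m$ product effects span enough of the operator space once we allow the trivial effect $\1$ as well — and verify $\sum_s|c_s|\le 2a$ by a direct norm estimate. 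Then $\left|\Tr[B(\sigma-\sigma')]\right|\le \sum_s|c_s|\,|p_\sigma(s)-p_{\sigma'}(s)| \le 2a\sum_s|p_\sigma(s)-p_{\sigma'}(s)|/2\cdot\text{(wait: need }\sum|c_s|\cdot\max\text{)}$; more carefully, $\left|\Tr[B(\sigma-\sigma')]\right|\le(\max_s|c_s|)\sum_s|p_\sigma(s)-p_{\sigma'}(s)|$ is too weak, so one uses instead $\le\sum_s|c_s|\cdot\frac12\sum_{s'}|p_\sigma(s')-p_{\sigma'}(s')|$ only when each $|c_s|$ multiplies the \emph{same} total variation — which it does not. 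The correct and simplest statement is: the map $\sigma\mapsto\Tr[B\sigma]$ factors as $\sigma\mapsto\Lambda(\sigma)\mapsto\langle \mathbf{c},\Lambda(\sigma)\rangle$ with $\|\mathbf c\|_\infty\le a$ (achievable by choosing $\Lambda$ to output directly the measurement of the single POVM $\{\tilde B,\1-\tilde B\}$ with $\tilde B=(B+a\1)/(2a)$ — this is $\locc$ because $B$ is a product and the associated POVM factorizes across parties), whence $|\langle\mathbf c,\Lambda(\sigma)-\Lambda(\sigma')\rangle|\le a\,\|\Lambda(\sigma)-\Lambda(\sigma')\|_1=2a\,D(\Lambda(\sigma),\Lambda(\sigma'))\le 2a\,D_{\locc}(\sigma,\sigma')$.

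The main obstacle is precisely this last point: checking that the two-outcome POVM $\{(B+a\1)/(2a),\;(a\1-B)/(2a)\}$ associated to a \emph{product} observable $B=B_1\otimes\cdots\otimes B_m$ is genuinely implementable by an $\locc$ (in fact non-adaptive local) channel, i.e. that its effects factorize into local POVMs on each party. This requires expressing $(a\1+B_1\otimes\cdots\otimes B_m)/(2a)$ as a local measurement outcome; since $B$ itself is not a tensor product of \emph{effects}, one must instead measure each $B_j$ locally (via $\{(b\1+B_j)/(2b),(b\1-B_j)/(2b)\}$), record the $2^m$ sign pattern, and post-process — which is the cross-term expansion above — and then verify that the post-processing coefficients sum in absolute value to at most $2a$. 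That norm bound, $\sum_s|c_s|\le 2a$, follows because $\prod_j\frac{b\pm B_j}{2b}$ has coefficients that telescope: $B=\sum_s(\prod_j s_j)\,b^{?}\cdots$, and a short induction on $m$ gives $\sum_s|c_s|=\prod_j\|B_j\|_\infty\cdot(\text{constant})\le 2a$ after absorbing signs; I would carry out this induction explicitly as the one genuine computation in the proof. Everything else — the triangle inequality over the $K$ terms, the identification of $D_{\locc}$ with the supremum of classical total variations, and the bound $|\langle\mathbf c,v\rangle|\le\|\mathbf c\|_\infty\|v\|_1$ — is routine.
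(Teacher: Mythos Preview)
Your overall strategy matches the paper's: split via the triangle inequality into the $K$ product terms, then for each product $B=B_1\otimes\cdots\otimes B_m$ realize $\sigma\mapsto\Tr[B\sigma]$ as classical post-processing of a product (hence $\locc$) measurement channel $\Lambda$, so that $|\Tr[B(\sigma-\sigma')]|\le\|\mathbf c\|_\infty\,\|\Lambda(\sigma)-\Lambda(\sigma')\|_1\le 2a\,D_{\locc}(\sigma,\sigma')$. The gap is in the local measurement you actually choose and the coefficient bound you claim for it. With a single $b=\max_j\|B_j\|_\infty$ and the two-outcome POVMs $\{(b\1\pm B_j)/(2b)\}$, the post-processing coefficients are $c_s=b^m\prod_js_j$, so $|c_s|=b^m$ and $\sum_s|c_s|=2^mb^m$; neither $b^m\le a$ nor $\sum_s|c_s|\le 2a$ holds in general (e.g.\ $B_1=Z$, $B_2=\varepsilon Z$, $a=\|B\|_\infty=\varepsilon$, yet $b^2=1$), so the promised ``short induction'' cannot deliver the stated bound. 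You oscillate between needing $\|\mathbf c\|_\infty\le a$ (the correct target, via H\"older) and $\|\mathbf c\|_1\le 2a$ (false for $m\ge2$ with any rescaling) without settling either, and the global POVM $\{(B+a\1)/(2a),(a\1-B)/(2a)\}$ you invoke does \emph{not} factorize across parties even when $B$ is a tensor product.

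The paper sidesteps all of this with one clean choice: measure each $A_{j,k}$ in its eigenbasis, via the local $d$-outcome projective channel $\Omega_{j,k}(\rho)=\sum_i\dketbra{i}\bra{a_{i,j,k}}\rho\ket{a_{i,j,k}}$. The coefficient at outcome $(i_1,\ldots,i_m)$ is then exactly the eigenvalue $\prod_j a_{i_j,j,k}$ of $A_{1,k}\otimes\cdots\otimes A_{m,k}$, so $\|\mathbf c\|_\infty\le\|A_{1,k}\otimes\cdots\otimes A_{m,k}\|_\infty\le a$ is automatic and the whole proof is four lines. Equivalently, your two-outcome construction becomes correct if you rescale each factor by its \emph{own} norm $b_j=\|B_j\|_\infty$ rather than the common $b$: then $|c_s|=\prod_j b_j=\|B\|_\infty\le a$, which is precisely the ``telescope to $\|B\|_\infty$'' you hinted at but did not carry out.
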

\begin{proof}
    Let $A_{j,k}=\sum_{i=1}^d a_{i,j,k} \dketbra{a_{i,j,k}}$ be the spectral decomposition of the local observables constituting $A$,  and define the measurement channels $\Omega_{j,k}(\rho) = \sum_{i=1}^d \dketbra{i} \bra{a_{i,j,k}}\rho \ket{a_{i,j,k}}$ for all $\rho\in\D(\cH)$, $j=1,\cdots,m$ and $k=1,\cdots,K$.  Then for any $\sigma,\sigma'\in\D(\cH^{\otimes n})$ it holds
    \begin{align}
        \left|\Tr[A(\sigma-\sigma')]\right| & \leq \sum_{k=1}^K\left|\Tr[(A_{1,k}\otimes\cdots\otimes A_{m,k})(\sigma-\sigma')]\right|\\
         &\leq a\sum_{k=1}^K  ||(\Omega_{1,k}\otimes\cdots\otimes\Omega_{m,k})(\sigma-\sigma')||_1 \\
         &\leq 2aK\, \max_{\Omega\, \locc} D(\Omega(\sigma),\Omega(\sigma'))\\
         &= 2aK\, D_{\locc}(\sigma,\sigma').
    \end{align}
\end{proof}
In particular, if an estimation procedure is based on the measurement of a local observable $O$, which is the case of our protocols, we can use Lemma~\ref{lemma:unif_cont} to bound the difference in bias and variance of the estimation with respect to an ideal state $\sigma$, when the latter is tampered into $\sigma'$. This generalizes~\cite[Theorem 1]{Shettell2022a} to the case of non-iid samples which are close in $\locc$-distance (we defer the proof to the Appendix~\ref{app:proofs}). 
\begin{theorem}\label{thm:estimation_under_locc_tampering}
    (Estimation performance under $\locc$-tampering)\\
Let $O$ be an $o$-bounded observable on $\cH$, and $\sigma, \sigma'\in\D(\cH^{\otimes m})$ two quantum states such that $D_{\locc}(\sigma,\sigma')\leq \epsilon$. 
Define $\bar O = \frac1m\sum_{j=1}^{m}O_j\otimes\1^{\otimes (m-1)}$ as the $m$-partite average of $O$,  $\E_\sigma \bar O = \Tr[\bar O\sigma]$ as its expectation on $\sigma$, and $\Delta^2_{\sigma_0,\sigma} \bar O =\Tr[(\bar O-\E_{\sigma_0}\bar  O)^2 \sigma]$ as the variance of its estimation on $\sigma$ with respect to the state $\sigma_0$.

Then, the difference in the expectation of $\bar O$ on the two states is bounded as
\begin{equation}\label{eq:bias_bound_gc}
    \left|\E_{\sigma} \bar O-\E_{\sigma'} \bar O\right| \leq 2 o \epsilon,
\end{equation}
while the difference in the variance of the estimation of $\bar O$ with respect to the state $\sigma$ is bounded as
\begin{equation}
        \left|\Delta^2_{\sigma,\sigma} \bar O-\Delta^2_{\sigma,\sigma'} \bar O\right| \leq 4 o^2 (2\epsilon+\epsilon^2).
\end{equation}
\end{theorem}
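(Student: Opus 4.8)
The plan is to obtain both inequalities from a single tool, Lemma~\ref{lemma:unif_cont}, by first rewriting the operators whose expectations we compare as short sums of product observables with explicitly bounded norms. For the bias bound, I would observe that $\bar O=\frac1m\sum_{j=1}^{m}O_j$, with $O_j$ the single-site observable $O$ acting on party $j$ and the identity elsewhere, is by construction a sum of $K=m$ product observables, each of operator norm at most $a=o/m$ since $\|O\|_\infty\le o$. Lemma~\ref{lemma:unif_cont} applied with these $K,a$ then gives immediately $|\E_\sigma\bar O-\E_{\sigma'}\bar O|=|\Tr[\bar O(\sigma-\sigma')]|\le 2aK\,D_{\locc}(\sigma,\sigma')\le 2o\epsilon$, which is \eqref{eq:bias_bound_gc}.

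For the variance bound, the key structural observation is that the centering state is $\sigma$ in \emph{both} quantities, so the operator $B:=(\bar O-c\1)^2$ with $c:=\E_\sigma\bar O$ is literally the same in $\Delta^2_{\sigma,\sigma}\bar O=\Tr[B\sigma]$ and in $\Delta^2_{\sigma,\sigma'}\bar O=\Tr[B\sigma']$; hence their difference collapses to $\Tr[B(\sigma-\sigma')]$, and the task reduces to decomposing $B$ into product observables. I would write $\bar O-c\1=\frac1m\sum_{j=1}^{m}R_j$ with $R_j$ the single-site operator $O-c\1$ on party $j$ and identity elsewhere, noting $\|O-c\1\|_\infty\le\|O\|_\infty+|c|\le 2o$ because $|c|\le\|\bar O\|_\infty\le o$; then $B=\frac1{m^2}\sum_{j,j'=1}^{m}R_jR_{j'}$ is a sum of $K=m^2$ product observables, each acting nontrivially on at most two sites and of operator norm at most $a=(2o)^2/m^2$. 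Lemma~\ref{lemma:unif_cont} then yields $|\Tr[B(\sigma-\sigma')]|\le 2aK\,\epsilon=8o^2\epsilon\le 4o^2(2\epsilon+\epsilon^2)$. If instead one compares $\Delta^2_{\sigma,\sigma}\bar O$ with the empirically-centered variance $\Delta^2_{\sigma',\sigma'}\bar O$ on the tampered state, expanding the square around the shifted mean contributes the extra term $(\E_\sigma\bar O-\E_{\sigma'}\bar O)^2\le 4o^2\epsilon^2$ by the bias bound, which accounts for the $\epsilon^2$ piece of the stated form $2\epsilon+\epsilon^2=(1+\epsilon)^2-1$.

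I do not expect any genuinely hard estimate here beyond Lemma~\ref{lemma:unif_cont}; the one place that needs care is the counting in the variance step. The apparent obstacle is that squaring an average of $m$ terms produces $m^2$ product operators, but the $1/m^2$ normalization inherited from $\bar O-c\1$ exactly compensates this count, keeping $aK=O(o^2)$ independent of the sample number $m$; one also has to absorb the scalar shift $c\1$ into each single-site factor, which merely costs a factor of $2$ in the norm, hence the replacement $o^2\mapsto 4o^2$. Everything else is the same mechanical application of the lemma already used for the bias.
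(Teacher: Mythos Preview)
Your proof of the bias bound is identical to the paper's. For the variance bound, your argument is correct but follows a genuinely different route. The paper first decomposes $\Delta^2_{\sigma,\sigma'}\bar O=\mathrm{Var}_{\sigma'}\bar O+(\E_\sigma\bar O-\E_{\sigma'}\bar O)^2$, bounds the squared-bias piece by $(2o\epsilon)^2$, and then controls $|\mathrm{Var}_\sigma\bar O-\mathrm{Var}_{\sigma'}\bar O|$ via a \emph{two-copy} trick, writing $\mathrm{Var}_\tau\bar O=\Tr[(\bar O^2\otimes\1-\bar O^{\otimes2})\tau^{\otimes2}]$ and invoking Lemma~\ref{lemma:unif_cont} on $\sigma^{\otimes2},\sigma'^{\otimes2}$ together with the subadditivity $D_{\locc}(\sigma^{\otimes2},\sigma'^{\otimes2})\leq 2\epsilon$. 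Your approach instead exploits the fact that both quantities share the same centering $c=\E_\sigma\bar O$, so their difference is $\Tr[B(\sigma-\sigma')]$ for the fixed operator $B=(\bar O-c\1)^2$, and you decompose $B$ directly on a \emph{single} copy. This is more elementary (no two-copy step, no tensor-power subadditivity of $D_{\locc}$) and even yields the slightly sharper bound $8o^2\epsilon$, of which the stated $4o^2(2\epsilon+\epsilon^2)$ is an immediate weakening. The paper's route, on the other hand, isolates the comparison of the genuine variances $\mathrm{Var}_\sigma\bar O$ and $\mathrm{Var}_{\sigma'}\bar O$, which is the quantity your final paragraph discusses and which explains where the $\epsilon^2$ term actually originates.
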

Importantly, at variance with~\cite{Shettell2022a}, here the bound on the variance has a dependence on $\epsilon$ rather than $\epsilon/m$. This is due to our relaxation of the iid assumption for the states $\sigma, \sigma'$, which is necessary to contemplate GC attacks.\\

\subsubsection{Security and faithfulness for MUB- and entanglement-based protocols}
We are now ready to prove the security and faithfulness of the MUB-based protocol:
\begin{theorem}
    \label{thm:main_mub}
    The one-way MUB-based DQS Protocol~\ref{protocol:1w_mub} is perfectly secure and its faithfulness is determined by $\epsilon$. Indeed, the difference of bias and variance of the unbiased estimator $\hat\phi$ calculated on the ideal probe state with respect to the same estimator $\hat\phi'$ on the tampered state are bounded, under general-coherent attacks, as
\begin{align}
    &\left|\E \hat\phi-\E\hat \phi'\right| \leq \frac{\epsilon_0}{n |\sin(2n\phi)|}, \label{eq:bias_gc_mub}\\
    & \left|\Delta^2\hat\phi-\Delta^2 \hat\phi'\right| \leq \frac{(2\epsilon_0+\epsilon_0^2)}{n^2 \sin^2(2n\phi)}.\label{eq:var_gc_mub}
\end{align}
with $\epsilon_0 = \sqrt{\bar\epsilon^2 + 4 f(T,N_d,n)}$.
In the case of individual attacks, one can take $N_d=0$ and employ the same bias bound of  \eqref{eq:bias_gc}  with $\epsilon_0\mapsto\bar\epsilon$, while the variance bound reads
\begin{equation}
    \left|\Delta^2\hat\phi-\Delta^2 \hat\phi'\right| \leq \frac{ (2\bar\epsilon/N_e+\bar\epsilon^2)}{n^2 \sin^2(2n\phi)}.
\end{equation}
The two-way version of the protocol has similar bounds with $\epsilon_0\mapsto \sqrt{\bar\epsilon^2 + 2 f(T,N_d,n)} + |\sin(\min\left\{n\phi,\pi/2\right\})|$, where $\delta(U)$ is the minimum arc-length on the unit-circle that contains all eigenvalues of $U$.
\end{theorem}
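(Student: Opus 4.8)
The plan is to assemble Theorem~\ref{thm:main_mub} from the two technical results already proved for the MUB protocol---Theorem~\ref{thm:distance_bound_1w_mub} (the check-round fidelity $\hat F_\bP$ controls the $\locc$-distance between the ideal and tampered phase-encoded states, even under GC attacks) and Theorem~\ref{thm:estimation_under_locc_tampering} (bias and variance of an average of local observables are $\locc$-uniformly continuous)---together with an elementary causality argument for security. Perfect security I would dispose of first: in the one-way protocol the phase-encoded $n$-qubit register never leaves Bob, so everything Eve can ever hold---her quantum ancilla after touching the probe on its single trip to Bob, plus the public classical transcript (Alice's announced $\bP$, Bob's sifting flags)---is the output of a channel whose input is causally prior to the application of $U(\phi)^{\otimes n}$; hence $D(\rho_E(\phi),\rho_E(\phi'))=0$ for all $\phi,\phi'$. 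Moreover, before reconciliation Eve's a priori input is $\tfrac16\sum_{\bP\in\{\pm\bX,\pm\bY,\pm\bZ\}}\dketbra{\bP,+}$, which on the logical subspace ${\rm span}\{\ket{\bf0},\ket{\bf1}\}$ equals $\tfrac12\1$ (each $\pm$-pair summing to the identity there), so she holds exactly the maximally mixed logical state---the paper's working notion of perfect security.

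For faithfulness under GC attacks I would recast the estimator of Protocol~\ref{protocol:1w_mub} as $\hat\phi=g(\bar O)$, where $\bar O=\tfrac1{N_e}\sum_{j=1}^{N_e}O_j$ is the $N_e$-round average of the relevant $1$-bounded single-round Pauli-like observable ($o=1$ in the language of Theorem~\ref{thm:estimation_under_locc_tampering}) and $g=\tfrac1{2n}\arccos(\cdot)$ (cf.\ the $\tfrac12\arccos$ of Protocol~\ref{protocol:1w_ent} at $n=1$). A short computation of $\ave{\bP}$ on $U(\phi)^{\otimes n}\ket{\bP',+}$, tracking the leakage of $U(\phi)^{\otimes n}$ out of the logical subspace for $n>1$, gives the Heisenberg-scaled response $\E_\sigma\bar O=\cos(2n\phi)$ on the ideal probe, so $\hat\phi$ is unbiased there and $g$ has local sensitivity $|g'|=1/\bigl(2n\,|\sin(2n\phi)|\bigr)$. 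Conditioned on the protocol not aborting, $1-\hat F_\bP\le\bar\epsilon^2$ for every $\bP$, so Theorem~\ref{thm:distance_bound_1w_mub} gives $D_{\locc}\bigl(V(\phi)\ket{\bP,+}^{\otimes N_e},V(\phi)\Tr_{T-N_e}[\tilde\sigma]V(\phi)^\dagger\bigr)\le\sqrt{\bar\epsilon^2+4f(T,N_d,n)}=\epsilon_0$. Feeding $\epsilon=\epsilon_0$, $o=1$ into Theorem~\ref{thm:estimation_under_locc_tampering} yields $|\E_\sigma\bar O-\E_{\sigma'}\bar O|\le2\epsilon_0$ and a variance discrepancy $\le4(2\epsilon_0+\epsilon_0^2)$; composing with $g$ (bias via the mean-value bound $|\E\hat\phi-\E\hat\phi'|\le|g'|\cdot2\epsilon_0$, variance via $\Delta^2g(\bar O)\approx|g'|^2\Delta^2\bar O$) reproduces \eqref{eq:bias_gc_mub} and \eqref{eq:var_gc_mub}.

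For the variants: under individual attacks the rounds are uncorrelated, so the de Finetti step is vacuous---take $N_d=0$, hence $\epsilon_0\mapsto\bar\epsilon$---and, since the tampered state is then i.i.d., running Theorem~\ref{thm:estimation_under_locc_tampering} at the single-round level and averaging separates the genuine-variance part (which is $\tfrac1{N_e}$ of a single-round variance, carrying $\bar\epsilon/N_e$) from the squared mean-shift part (which does not average down, carrying $\bar\epsilon^2$), giving the numerator $2\bar\epsilon/N_e+\bar\epsilon^2$. For the two-way protocol the encoded register travels back through Eve, who cannot distinguish check from estimation rounds during the quantum phase; decomposing her action as $\Phi_{\rm out}\circ\bigl(V(\phi)\cdot V(\phi)^\dagger\bigr)\circ\Phi_{\rm in}$, the passed check constrains only $\Phi_{\rm out}\circ\Phi_{\rm in}$ on the \emph{un-encoded} probe, leaving Eve free to ``under-encode'' (swap the probe out, let Bob encode a decoy, swap the stored probe back), which adds to $\epsilon_0$ the irreducible term $\max_\bP D\bigl(V(\phi)\ket{\bP,+},\ket{\bP,+}\bigr)\le|\sin(\min\{n\phi,\pi/2\})|$; the rest of the chain is unchanged, while security collapses because Eve now also holds a $\phi$-dependent state.

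I expect the main obstacle to be making the variance propagation through the nonlinear $\arccos$ rigorous: $g'$ diverges near $\pm1$, so one needs the (empirical) correlator to stay bounded away from the endpoints. In the i.i.d./individual regime this is routine concentration, but under GC attacks $\bar O$ may itself have $\Ord{1}$ variance, so the step requires either restricting to $\phi$ bounded away from $0$ and $\pi/(2n)$ together with a Lipschitz-on-a-good-event estimate, or explicitly adopting the leading-order linearization the stated bound uses. Secondary chores are the clean evaluation $\E_\sigma\bar O=\cos(2n\phi)$ for general $n$ (logical-subspace leakage bookkeeping) and checking that the swap attack is genuinely extremal in the two-way term.
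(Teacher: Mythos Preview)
Your proposal is correct and takes essentially the same route as the paper: security by the causality observation that Eve never touches the phase-encoded register; faithfulness by chaining Theorem~\ref{thm:distance_bound_1w_mub} into Theorem~\ref{thm:estimation_under_locc_tampering} with $o=1$ and then propagating through $g=\tfrac{1}{2n}\arccos$; the individual-attack case by dropping de~Finetti and invoking the i.i.d.\ bounds of~\cite{Shettell2022a}. Your worry about linearizing through $\arccos$ near the endpoints is legitimate, and the paper also treats this only at the level of first-order error propagation.

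The one place your sketch differs in spirit is the two-way term. You argue operationally that a swap/under-encode attack is extremal and read off $\max_\bP D(V(\phi)\ket{\bP,+},\ket{\bP,+})$. The paper instead works purely with inequalities: after applying de~Finetti it uses the triangle inequality on the $\locc$-distance to split off a term $D\bigl(V(\phi)\rho^{\otimes N_e}V(\phi)^\dagger,\rho^{\otimes N_e}\bigr)$ for the \emph{arbitrary} de~Finetti component $\rho$, and bounds that via Ac\'in's unitary-discrimination result $\min_{\ket v}F(\ket v,U(\phi)\ket v)=\cos^2\min\{\delta(U(\phi))/2,\pi/2\}$. This delivers the $|\sin(\min\{n\phi,\pi/2\})|$ additive term directly and bypasses your ``check the swap attack is extremal'' chore; conversely, your operational picture gives a cleaner physical explanation of why the term is there.
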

\begin{proof}
Perfect security in the one-way protocol is straightforward to prove. Indeed, even though Eve can fully control the quantum state received by Bob, e.g., by replacing it with a known probe state, she cannot measure said state nor access Bob's measurement outcomes to estimate the phase. 

As for faithfulness, if the check step passes, the protocol guarantees that $\hat F_\bP\geq 1-\bar\epsilon^2$ for all $\bP$. Hence, by Theorem~\ref{thm:distance_bound_1w_mub}, the estimation rounds are $\epsilon_0$-close to ideal for all $\bP$, i.e.,
\begin{equation}
    D_{\locc}(V(\phi)\ket{\bP,+}^{\otimes N_e},V(\phi)\Tr_{T-N_e}[\tilde\sigma] V(\phi)^\dagger)\leq \epsilon_0.
\end{equation}
Combining this estimate with Theorem~\ref{thm:estimation_under_locc_tampering}, with 
$O = \bP\in\{\pm\bX,\pm\bZ\}$ and $o=1$ we obtain
\begin{align}\label{eq:bias_bound_gc}
    &\left|\E_{\sigma} \bar O-\E_{\sigma'} \bar O\right| \leq 2 \epsilon_0,\\
      &  \left|\Delta^2_{\sigma,\sigma} \bar O-\Delta^2_{\sigma,\sigma'} \bar O\right| \leq 4 (2\epsilon_0+\epsilon_0^2).
\end{align}
The bias and variance on the estimator are obtained via error propagation, rescaling by
\begin{equation}
   | \partial_\theta \ave{O}| = |\partial_\theta \cos(n\phi)| = n |\sin(n\phi)|,
\end{equation}
or its square, where the expectation is taken with respect to the phase-encoded ideal state $e^{-\ii\frac\phi2 \bY}\ket{\bP,+}$ for each $\bP$. 

If we restrict to individual attacks, Theorem~\ref{thm:distance_bound_1w_mub}
simplifies considerably since the collective state $\tilde\sigma = \tilde\rho^{\otimes T}$ is a product state with respect to each round, so that we can apply directly the bias and variance bounds of~\cite{Shettell2022a}. Then, calling $W(\phi) = e^{-\ii \frac\phi2 Y}$, we have
\begin{align}\label{eq:bound_with_fidelity_estimation}
    & D(W(\phi)\ket{\bP,+},W(\phi)\tilde\rho W(\phi)^\dagger) \\
    &\leq \sqrt{1-F(\ket{\bP,+},\tilde\rho ) } \leq \bar\epsilon,
\end{align}
where we have used the unitary-invariance of the trace-distance and the Fuchs-Van-der-Graf inequality, and the fidelity to the ideal state is estimated as $\hat F_\bP$ from the check rounds. 

The proof of the two-way case is deferred to the Appendix~\ref{app:2w}.
 \end{proof}
The proof of Theorem~\ref{thm:main} follows straightforwardly by combining Theorem~\ref{thm:main_mub} with Lemma~\ref{lemma:equivalence} above. 

\subsection{Experimental methods}

The photon source is a 3-mm beta-barium borate crystal pumped by a CW laser at 405 nm - see Fig.~\ref{fig:exp_setup}. The downconversion emission is collected by single-mode fibres through interference filters with 7.5 nm full width at half maximum. Photons are then delivered to a partially polarising beam splitter (PPBS) $T_H=1,T_V=1/3$), both with the polarisation $\ket{\tilde D} = (\ket{H}+\sqrt{3}\ket{V})/2$ ($H$ is the horizontal polarisation, $V$ the vertical, $D$ diagonal at 45$^\circ$, $A$ diagonal at 135$^\circ$, $L$ left-circular, $R$ right-circular). This is chosen in such a way to produce a maximally entangled state $\ket{\psi}=\left(\ket{HD}+{VA}\right)/\sqrt{2}$ by nonclassical interference, postselecting on coincidence events~\cite{Langford05,Kiesel05,Okamoto05}. The system is embedded in a Sagnac interferometer in order to curtail the impact of imperfect transimittivities of the PPBS (not shown in the figure) \cite{bizzarri2024steering,bizzarri2024quasiprob,bizzarri2025sloppy}. 

Polarisation is then analysed by means of a standard arrangement of a half wave-plate (HWP), a quarter wave-plate (QWP), and a polarising beam splitter (PBS). Acquisition windows are set as 0.1 ns, each containing around 100 coincidence detection events. Each correlator, used to either the check or the estimation rounds, is reconstructed as  
\begin{align}
    \ave{A\otimes B} = \frac{\sum_{i,j}(-1)^{i+j}N_{ij}}{\sum_{i,j}N_{ij}}\quad i,j = 0,1
    \label{eq:correlators_computation}
\end{align}
where $A$ and $B$ are the proper Pauli operators $X,Y$ or $Z$, and $N_{i,j}$ are the coincidence counts registered - $i=0$ corresponds to $D,L$ or $H$, $i=1$ to $A,R$ or $V$, respectively for the three observables. Fidelity is estimated from events collected in 200 detection windows. The phases in Fig.~\ref{fig:dati} are the average over 200 values, each one derived from a distinct detection window. The uncertainty on the phase is obtained by error propagation on \eqref{eq:correlators_computation}, leading to 
\begin{align}
 \Delta^2\hat\phi' = \frac{\Delta^2(Z\otimes X)+\Delta^2(X\otimes Z)}{4\left( 4-\left(\overline{\ave{X\otimes Z}+\ave{Z\otimes X}}\right)^2\right)}.
\end{align}
Here the bar stands for the average over the 200 experimental values.

\section*{Data availability statement} 
The data supporting our findings are available on a public-access repository at~\cite{Repo}.

\section*{Acknowledgements} 
We thank F. De Stefani for valuable discussion, and A. Fabbri for assistance with malfunctioning equipment.
This work was supported by the PRIN project PRIN22-RISQUE-2022T25TR3 of the Italian Ministry of University. G.B. is supported by Rome Technopole Innovation Ecosystem (PNRR grant M4-C2-Inv). IG acknowledges the support from MUR Dipartimento di Eccellenza 2023-2027. M.R. acknowledges support from the project PNRR - Finanziato dall’Unione Europea - MISSIONE 4 COMPONENTE 2 INVESTIMENTO 1.2 - “Finanziamento di progetti presentati da giovani ricercatori” - Id MSCA 0000011-SQUID - CUP F83C22002390007 (Young Researchers) - Finanziato dall’Unione europea - NextGenerationEU. 

\bibliography{library.bib,library2.bib}

\appendix
\onecolumngrid

\section{Equivalence of entanglement- and MUB-based protocols}\label{app:equivalence}
\begin{proof} (Proof of Lemma~\ref{lemma:equivalence}
)\\
In the entanglement-based protocol, Alice's register never interacts with Eve, hence the expectations in \eqref{eq:fidelity_ent} can be expressed as follows:
\begin{align}
    &\ave{Z\otimes \bX} = \frac{\tr{\bX \cdot\Phi(\dketbra{\boldsymbol{+}})} - \tr{\bX \cdot\Phi(\dketbra{\boldsymbol{-}})}}{2},\label{eq:zx_mean}\\
    &\ave{X\otimes \bZ} = \frac{\tr{\bZ \cdot\Phi(\dketbra{\bf0})} - \tr{\bZ \cdot\Phi(\dketbra{\bf1})}}{2},\label{eq:xz_mean}\\
    &\ave{Y\otimes \bY} = \frac{\tr{\bY \cdot\Phi(\dketbra{\bf R})} - \tr{\bY \cdot\Phi(\dketbra{\bf L})}}{2},
\end{align}
where we have used that
\begin{equation}
    \ket{\boldsymbol{\psi_+}}\equiv\frac{\ket{+}\ket{\bf0}+\ket{-}\ket{\bf1}}{\sqrt2}\equiv\frac{\ket R\ket{\bf R} + \ket L \ket{\bf L}}{\sqrt2}
\end{equation}
and the fact that channel controlled by Eve acts only on the $B$-register, leaving the $A$-register invariant. We identify in the above expressions differences of expectations of Pauli-like operators on their evolved $+1$-eigenstates under $\Phi$. Furthermore, we notice that, for any $\bP$, 
\begin{equation}
    \ave{\bP}=\tr{\bP \cdot\Phi(\dketbra{\bP,+})} = 2 F(\ket{\bP,+},\Phi(\dketbra{\bP,+}))-1 \equiv 2\hat F_\bP-1.
\end{equation}
Plugging these into~\eqref{eq:fidelity_ent}, we obtain
\begin{align}
    \hat F = \frac14\left(1+\frac12\sum_{\bP}(2 \hat F_\bP-1)\right).
\end{align}
Therefore, if $\hat F_\bP \geq 1-\bar \epsilon^2$ for all $\bP$ in Protocol~\ref{protocol:1w_mub}, then in Protocol~\ref{protocol:1w_ent} it holds $\hat F \geq 1-\epsilon^2$, with $\epsilon^2 = 3\bar\epsilon^2/2$. Furthermore, using \eqref{eq:zx_mean},\eqref{eq:xz_mean} it is straightforward to show that the estimation step calculates the very same quantity in both protocols.
The expected number of rounds in Protocol~\ref{protocol:1w_mub} is slightly different, considering that no sifting is needed:  we have $N_e =  p_e T$, $N_c = p_c T$ and $N_d = p_d T$ which can be expressed in terms of $T_{e,c,d}$.
\end{proof}

\section{Proofs for one-way faithfulness under GC attacks}\label{app:proofs}
\begin{proof}(Proof of Lemma~\ref{lemma:gentle_measurement})\\
    Using \eqref{eq:def_locc_distance} and the variational formulation of trace-distance we have
    \begin{align}
        2D_{\locc}(\tau,\sigma) =  \max_{\substack{\Lambda\text{  }\locc,\\ \1\geq E\geq0}} \Tr[E \cdot \Lambda(\tau-\sigma)] \geq \Tr[E\tau]-\Tr[E\sigma],
    \end{align}
    where the inequality follows by choosing a specific $E$ and $\Lambda={\rm Id}$. The result follows by rearranging terms, and one can easily switch the roles of $\tau$ and $\sigma$.
\end{proof}
\begin{proof}
(Proof of Theorem~\ref{thm:estimation_under_locc_tampering})\\
To prove the bound for the expectation, we apply Lemma \ref{lemma:unif_cont} with
$A =\bar O$. This implies $a = o/m$ and $K=m$, and we obtain
    \begin{align}
       \left|\E_{\sigma} \bar O-\E_{\sigma'} \bar O\right|  =\left|\Tr[A(\sigma-\sigma')]\right| &\leq 2o\epsilon.
    \end{align}
    To prove the inequality for the variance, following~\cite{Shettell2022a}, we first note that
    \begin{align}
        \Delta^2_{\sigma,\sigma'}\bar O &= \Tr[(\bar O-\E_{\sigma'} \bar O+\E_{\sigma'} \bar O-\E_{\sigma} \bar O)^2 \sigma'] \\
        &= {\rm Var}_{\sigma'}\bar O + \left(\E_{\sigma'} \bar O - \E_{\sigma} \bar O\right)^2
    \end{align}
with ${\rm Var}_{\sigma}\bar O = \Delta^2_{\sigma,\sigma}\bar O $, and hence
    \begin{equation}
        \left|\Delta^2_{\sigma,\sigma} \bar O-\Delta^2_{\sigma,\sigma'} \bar O\right| \leq \left|{\rm Var}_{\sigma}\bar O - {\rm Var}_{\sigma'}\bar O\right| + \left(\E_{\sigma'} \bar O - \E_{\sigma}\bar O\right)^2.
\end{equation}
    The second term above can be bounded by $(2o\epsilon)^2$ via \eqref{eq:bias_bound_gc}. For the first one, instead, we can write
    \begin{equation}
        {\rm Var_\sigma \bar O} = \Tr[O^2\sigma]-(\Tr[O\sigma])^2 = \Tr[(O^2\otimes \1 - O^{\otimes 2})\sigma^{\otimes 2}].
    \end{equation}
    Therefore, we can apply Lemma~\ref{lemma:unif_cont} with $A=\frac1m\sum_{j=1}^{m} (O^2\otimes \1 - O^{\otimes 2})_j\otimes\1^{\otimes(m-1)}$ on two copies of the states, satisfying $D_{\locc}(\sigma^{\otimes 2},\sigma'^{\otimes 2})\leq 2\epsilon$. This implies $a=2o^2/m$ and $K=m$, and we obtain
    \begin{align}
        \left|\Delta^2_{\sigma,\sigma} \bar O-\Delta^2_{\sigma,\sigma'} \bar O\right|&\leq\Tr\left[A (\sigma^{\otimes 2}-\sigma'^{\otimes 2})\right] + 4o^2\epsilon^2\\
        &\leq 4o^2 \cdot 2\epsilon + 4o^2\epsilon^2 = 4o^2(2\epsilon+\epsilon^2).
    \end{align}
\end{proof}

\section{Proof of faithfulness in the two-way MUB-based protocol}\label{app:2w}
Here we consider the extension of our one-way protocols to a two-way scenario. As already noted, the form of both Protocols~\ref{protocol:1w_ent} and~\ref{protocol:1w_mub} has very minor changes, essentially with Bob sending Alice the quantum state after the Encoding step and Alice replacing Bob's role in all following steps. 

As also noted above, the security of the protocol cannot be guaranteed in this case. Indeed, while one could imagine to have Bob disclose the estimation rounds only after Alice has checked for the presence of Eve, this latter task is harder than it appears. In particular, it is hard for Alice to quantify, from her received state, how much information might have leaked to Eve. 

In the following we thus focus on faithfulness under GC attacks, which can be guaranteed also in this case. 
\begin{proof}
(Proof of the two-way part of Theorem~\ref{thm:main_mub})    
With respect to the one-way protocol, in the two-way case the major difficulty is that Eve can interact with the quantum state both before and after Bob, possibly even employing a quantum memory. This makes the estimation of tampering on the estimation rounds from the check rounds more difficult, changing Theorem~\ref{thm:distance_bound_1w_mub}. We refer to the communication rounds from Alice to Bob as \emph{forth} communication and to those from Bob to Alice as \emph{back} communication. After $T$ rounds, and before any measurement takes place, the quantum state held by Alice can be written as
\begin{equation}
    \tilde\tau(\phi)= \Tr_E[W\, (V(\phi)\otimes \1)\tilde\sigma_{BE}(V(\phi)\otimes\1) W^\dagger],
\end{equation}
where $\tilde \sigma_{BE}$ is the state shared by Bob and Eve after $T$ transmissions from Alice to Bob and $W$ is a unitary dilation of Eve's back channel acting on the states that Bob sends to Alice. 
Note that also in this case the state $\tilde\sigma$ can be taken permutation-invariant, as the protocol still employs randomized measurements. 

We can then bound the $\locc$-distance to the ideal phase-encoded state similarly to the one-way case:
\begin{align}
    &D_{\locc}(V(\phi)\ket{\psi}^{\otimes N_e},\Tr_{T-N_e}[\tilde\tau(\phi)]) \\
&\leq D_{\locc}\left(V(\phi)\ket{\psi}^{\otimes N_e},\int d\mu(\rho)  \rho^{\otimes N_e}\right) + f(T,N,n).
 \label{eq:decomposition_tampering_two_way}
\end{align}
where the inequality follows from Theorem~\ref{thm:de_finetti}, with the same quantities as defined for \eqref{eq:de_finetti_applied} and the second argument of the $\locc$-distance can be $\phi$-dependent. The first term above can be treated as follows:
\begin{align}
    &D_{\locc}\left(V(\phi)\ket{\psi}^{\otimes N_e},\int d\mu(\rho)  \rho^{\otimes N_e}\right) \\
    & \leq D_{\locc}\left(V(\phi)\ket{\psi}^{\otimes N_e},\int d\mu(\rho)  V(\phi)\rho^{\otimes N_e} V(\phi)^\dagger\right)\\
    &+D\left(\int d\mu(\rho)  V(\phi)\rho^{\otimes N_e} V(\phi)^\dagger,\int d\mu(\rho)  \rho^{\otimes N_e}\right) \\
    &\leq \sqrt{1- \int d\mu(\rho) F(\dketbra{\psi},\rho)^{N_e} } + \int d\mu(\rho) D(V(\phi)\rho^{\otimes N_e} V(\phi)^\dagger,\rho^{\otimes N_e}) \\
    &\leq \sqrt{1- F((\dketbra{\psi})^{\otimes N_c},\Tr_{T-N_c}[\tilde\sigma_B]) + 2 f(T,N,n)} + \int d\mu(\rho) D(V(\phi)\rho^{\otimes N_e} V(\phi)^\dagger,\rho^{\otimes N_e}) ,\label{eq:2w_intermediate_step_distance_bound}
\end{align}
where, in the first term of the last inequality, we have used \eqref{eq:gentle_check} and the fact that $x^{N_e}\geq x^{N_c}$ for $1\geq x\geq 0$ and $N_c\geq N_e$, and $\tilde\sigma_B = \Tr_E[\tilde\sigma_{BE}]$. 

For the second term in \eqref{eq:2w_intermediate_step_distance_bound}, instead, we need to bound the  distance between a product state $\rho^{N_c}$ and its rotated version under $V(\phi)$, proceeding as follows:
\begin{align}
    D(\rho^{N_c},V(\phi)\rho^{N_c} V(\phi)) &\leq\sqrt{1- F(\rho,U(\phi)\rho U(\phi)^\dagger)^{N_e}}\leq\sqrt{1- F(\ket v,(U(\phi)\otimes \1)\ket v )^{N_e}}
\end{align}
where in the first equality we have used the Fuchs-van-de-Graf inequality and recalled that $V(\phi)=U(\phi)^{\otimes N_e}$, while in the second one the non-increasing property of trace-distance under the partial-trace channel and $\ket v$ is a purification of $\rho$. Finally, taking the worst-case over all possible states $\ket v$, we obtain
\begin{align}
    D(\rho^{N_c},V(\phi)\rho^{N_c} V(\phi)) &\leq  \sqrt{1- \min_{\ket v} F(\ket v,(U(\phi)\otimes \1)\ket v )^{N_e}},
\end{align}
which corresponds to Eve being able to discriminate the presence or absence of $U(\phi)$ with the largest success probability allowed by quantum mechanics. This problem has been studied by Acin~\cite{Acin2001a}, who proved that
\begin{equation}
    \min_{\ket v} F(\ket v,U(\phi)\ket v) = \cos^2\min\left\{\frac{\delta(U(\phi))}{2},\frac\pi2\right\},
\end{equation}
where $\delta$ is the minimum arc-length on the unit-circle that contains all eigenvalues of $U$. Putting all together we obtain the bound{\small
\begin{equation}\label{eq:2w_distance_bound}
    D_{\locc}\left(V(\phi)\ket{\psi}^{\otimes N_e},\int d\mu(\rho)  \rho^{\otimes N_e}\right) \leq \sqrt{1- F((\dketbra{\psi})^{\otimes N_c},\Tr_{T-N_c}[\tilde\sigma_B]) + 2 f(T,N,n)} + \left|\sin\frac{\min\{\delta(U(\phi)),\pi\}}{2}\right|
\end{equation}
}
Note that, in our case, 
\begin{equation}
    U(\phi) = (e^{-\ii \phi}\dketbra R + e^{\ii \phi}\dketbra L)^{\otimes n}
\end{equation}
has eigenvalues $\lambda_j = e^{-\ii n\phi (1-2j)}$ for $j=0,\cdots,n$. Hence $\delta(U(\phi)) = 2 n \phi$ for $n\phi<\pi$. 
Finally, the distance bound~\eqref{eq:2w_distance_bound} can be employed as in the one-way case to obtain the theorem statement.
\end{proof}

\end{document}